\documentclass[12pt]{article}
\pdfoutput=1
\usepackage{times}
\usepackage{natbib}
 \bibpunct{(}{)}{;}{a}{,}{,}
\usepackage{graphicx}
\usepackage{amsmath,amssymb,amsthm}
\usepackage{bm}
\usepackage{rotating}
\usepackage[vlined]{algorithm2e}
\usepackage{rotating}
\usepackage{multicol}
\usepackage{titlesec}
\usepackage{latexsym}
\usepackage[pdftex,colorlinks=true,linkcolor=blue,citecolor=blue,urlcolor=blue,bookmarks=false,pdfpagemode=None]{hyperref}
\usepackage{url}
\makeatletter
\makeatother
\urlstyle{leo}
\usepackage{verbatim}
\usepackage{fancyhdr}
\usepackage{setspace}
\usepackage{paralist}
\usepackage{boxedminipage}
\usepackage{color}
\definecolor{lightgrey}{rgb}{0.85,0.85,0.85}
\usepackage{lineno}
\usepackage[top=1in, bottom=1in, left=1in, right=1in]{geometry}

\pagestyle{fancy}
\headheight 0pt 
\rhead{}
\chead{}
\lhead{}
\cfoot{\thepage}

\newcommand{\bv}{\begin{array}}

\def\abovestrut#1{\rule[0in]{0in}{#1}\ignorespaces}
\def\belowstrut#1{\rule[-#1]{0in}{#1}\ignorespaces}
\def\abovespace{\abovestrut{0.20in}}
\def\aroundspace{\abovestrut{0.20in}\belowstrut{0.10in}}
\def\belowspace{\belowstrut{0.10in}}

\newtheorem{thm}{Theorem}
\newtheorem{deff}{Definition}

\newtheorem{lem}{Lemma}

\parindent15pt
\parskip1.5ex 
\onehalfspacing


\begin{document}

\title{Graphlet decomposition of a weighted network\protect\thanks{This paper will appear in {\it Journal of Machine Learning Research, Workshop \& Conference Proceedings}, vol. 22 (AISTATS), 2012. Address correspondence to EM Airoldi, \href{mailto:airoldi@fas.harvard.edu}{airoldi@fas.harvard.edu}.}}

\author{Hossein Azari Soufiani,~ Edoardo M. Airoldi\\
Department of Statistics\\
Harvard University, Cambridge, MA 02138, USA}
\date{}

\maketitle


\begin{abstract}
We introduce the {\it graphlet decomposition} of a weighted network, which encodes a notion of social information based on social structure.  
We develop a scalable inference algorithm, which combines EM with Bron-Kerbosch in a novel fashion, for estimating the parameters of the  model underlying graphlets using one network sample.
We explore some theoretical properties of the graphlet decomposition, including computational complexity, redundancy and expected accuracy. 
We demonstrate  graphlets on synthetic and real data. We analyze messaging patterns on Facebook and criminal  associations in the 19th century.

\bigskip
\noindent {\bf Keywords}: Expectation-Maximization; Bron-Kerbosch; sparsity; deconvolution; massive data; statistical network analysis; parallel computation; social information.
\end{abstract}

\newpage
\singlespacing 
\tableofcontents
\onehalfspacing
\newpage


\section{Introduction}

In recent years, there has been a surge of interest in 
 social media platforms such as Facebook and Twitter,
 collaborative projects in the spirit of Wikipedia,
 and services that rely on the social structure underlying these services, including 
  \href{www.cnn.com}{cnn.com}'s ``popular on Facebook'' and 
  \href{www.nytimes.com}{nytimes.com}'s ``most emailed''.
As these platforms and services have been gaining momentum,
efforts in the computational social sciences have begun studying patterns of behavior that result in organized social structure and interactions \citep[e.g., see][]{laze:pent:adam:aral:2009}. 
Here, we develop a new tool to analyze data about social structure and interactions routinely collected  in this context.

There is a rich literature of statistical models to analyze binary interactions or networks \citep{gold:zhen:fien:airo:2010}.
However, while many interesting interaction data sets involve weighted measurements on pairs on individuals, arguably, only a few of the existing models are amenable to analyze the resulting weighted networks.
We consider situations where we observe an undirected weighted network, encoded by a symmetric adjacency matrix with integer entries and diagonal elements equal to zero.
Our modeling approach is to decompose the graphon $\Lambda$, which defines an exchangeable model \citep{kall:2005}  for integer-valued measurements of pairs of individuals $P(Y|\Lambda)$, in terms of a number of basis matrices that grows with the size of the network,
\[
\textstyle
\Lambda = \sum_i \, \mu_i \, P_i.
\]
The factorization of $\Lambda$ is related to models for binary networks based on the singular value decomposition and other factorizations \citep{hoff:2009,kim:lesk:2010}.
However, 
 we abandon the popular orthogonality constraint \citep{jian:yao:liu:guib:2011} among the basis matrices $P_i$s to attain interpretability in terms of multi-scale social structure, and 
 we chose not model zero edge weights to enable the estimation to scale linearly with the number of positive weights \citep{lesk:chak:klei:falo:2010}.
These choices often lead to a non-trivial inferential setting \citep{airohaas:2010}.
Related methods utilize a network to encode inferred dependence among multivariate data \citep{coif:magg:2006,lee:nadl:wass:2008}.
We term our method ``graphlet decomposition'', as it is reminiscent of a wavelet decomposition of a graph. 
An unrelated literature uses the term graphlets to denote network motifs \citep{przu:corn:juri:2006,kond:sher:borg:2009}.

The key features of the graphlet decomposition we introduce in this paper are:
 the basis matrices $P_i$s are non-orthogonal, but capture overlapping communities at multiple scales;
 the information used for inference comes exclusively from positive weights;
 parameter estimation is linear in the number of positive weights, and 
 avoids dealing with permutations of the input adjacency matrix $Y$ by inferring the basis elements $P_i$ from data.

The basic idea is to posit a
 Poisson model for the edge weights $P(Y|\Lambda)$ and
 parametrize the rate matrix $\Lambda$ in terms of a binary factor matrix $B$.
 The binary factors can be interpreted as latent features that induce social structure through homophily.
 The factor matrix $B$ allows for an equivalent interpretation as basis matrices, which define overlapping cliques of different sizes.

Inference is carried out in two stages:
 first we identify candidate basis matrices using Bron-Kerbosch,
 then we estimate coefficients using EM.
 The computational complexity of the inference is addressed in Section \ref{sec:complexity}.


\section{Graphlet decomposition}
\label{secmod}

Consider observing an undirected weighted network, encoded by a symmetric adjacency matrix with integer entries and diagonal elements equal to zero.
In the derivations below, we avoid notation whose sole purpose is to set to zero  diagonal elements of the resulting matrices.

\subsection{Statistical model}

Intuitively, we want to posit a data generating process that can explain edge weights in terms of social information,  quantified by  community structure at multiple scales and possibly overlapping.
We choose to represent communities in terms of their constituent maximal cliques.

\begin{deff}\label{def1}
The graphlet decomposition (GD) of a matrix $\Lambda$ with non-negative entries $\lambda_{ij}$ is defined as
 $\Lambda= B\,W\,B'$,
where 
 $B$ is an  $N \times K$ binary matrix,
 $W$ is a $K \times K$ diagonal matrix.
Explicitly, we have
\begin{eqnarray*}
\left( \begin{matrix}
\lambda_{11}& \lambda_{12}&\ldots&\lambda_{1N}\\
\lambda_{21}& \lambda_{22}&\ldots&\lambda_{2N}\\
\vdots&\vdots&\ddots &\vdots\\
\lambda_{N1}& \lambda_{N2}&\ldots&\lambda_{NN}\end{matrix}\right)=
B\left( \begin{matrix}
\mu_{1}& 0&\ldots&0\\
0& \mu_{2}&\ldots&0\\
\vdots&\vdots&\ddots &\vdots\\
0& 0&\ldots&\mu_{K}\end{matrix}\right)B'
\end{eqnarray*}
where $\lambda_{ii}$ is set to zero and $\mu_i$ is positive for each $i$.
\end{deff}

The basis matrix $B$ can be interpreted in terms  overlapping communities at multiple scales.
To see this, denote the $i$-th column of the matrix $B$ as $b_{\cdot i}$.
We can re-write $\Lambda=\sum_{i=1}^K \mu_i \, P_i$, where each $P_i$ is an $N \times N$ matrix defined as $P_i = b_{\cdot i} \, b_{\cdot i}'$, in which the diagonal is set to zero.
We refer to $P_i$ as basis elements.
Think of the $i$-th column of $B$ as encoding a community of interest; then the vector $b_{\cdot i}$ specifies which individuals are members of that community, and the basis element $P_i$ is the binary graph that specifies the connections among the member of that community. The same individual may participate to multiple communities, and  the communities may span different scales. 

The model for an observed network $Y$ is then
\begin{eqnarray} 
 \label{model}
 \textstyle
 Y \sim \hbox{Poisson}_{+} \, ( \sum_{i=1}^K \mu_i \, P_i ).
\end{eqnarray}

This is essentially a binary factor model with a truncated Poisson link and two important nuances. 
Standard factor models  assume that entire rows of the matrix $Y$ are conditionally independent, even when the matrix is square and row and column $i$ refer to the same unit of analysis. 
In contrast, our model treats the individual random variables $y_{ij}$ as exchangeable, 
and imposes the restriction that positives entries in the factors map (in some way) to maximal cliques in the observed network. 
In addition, the matrix $Y$ is sparse; our model implies that zero edge weights carry no information about maximal cliques, 
and thus are not relevant for estimation. 
These nuances---map between cliques and factors, and information from positives entries only---produce a new and interesting inferential setting that scales to large weighted networks.


\subsection{Inference}

Computing the graphlet decomposition of a  network $Y$ involves the estimation of a few critical parameters: the number of basis elements $K$, the basis matrix $B$ or equivalently the basis elements $P_{1:K}$, and the coefficients associated with these basis elements $\mu_{1:K}$.

We develop a two-stage estimation algorithm.
First, we identify a candidate set of $K^c$ basis elements, using the Bron-Kerbosch algorithm, and obtain a candidate basis matrix $B^c$.
Second, we develop an expectation-maximization (EM) algorithm to estimate the corresponding coefficients $\mu_{1:K^c}$; several of these coefficients will vanish, thus selecting a final set of basis elements.
We study theoretical properties of this estimation algorithm in Section \ref{sec:theory}.
Figure \ref{figregraph} illustrates the steps of the estimation process on a toy network.

\begin{figure*}[t!]
 \centering
  \includegraphics[width=1.06\textwidth]{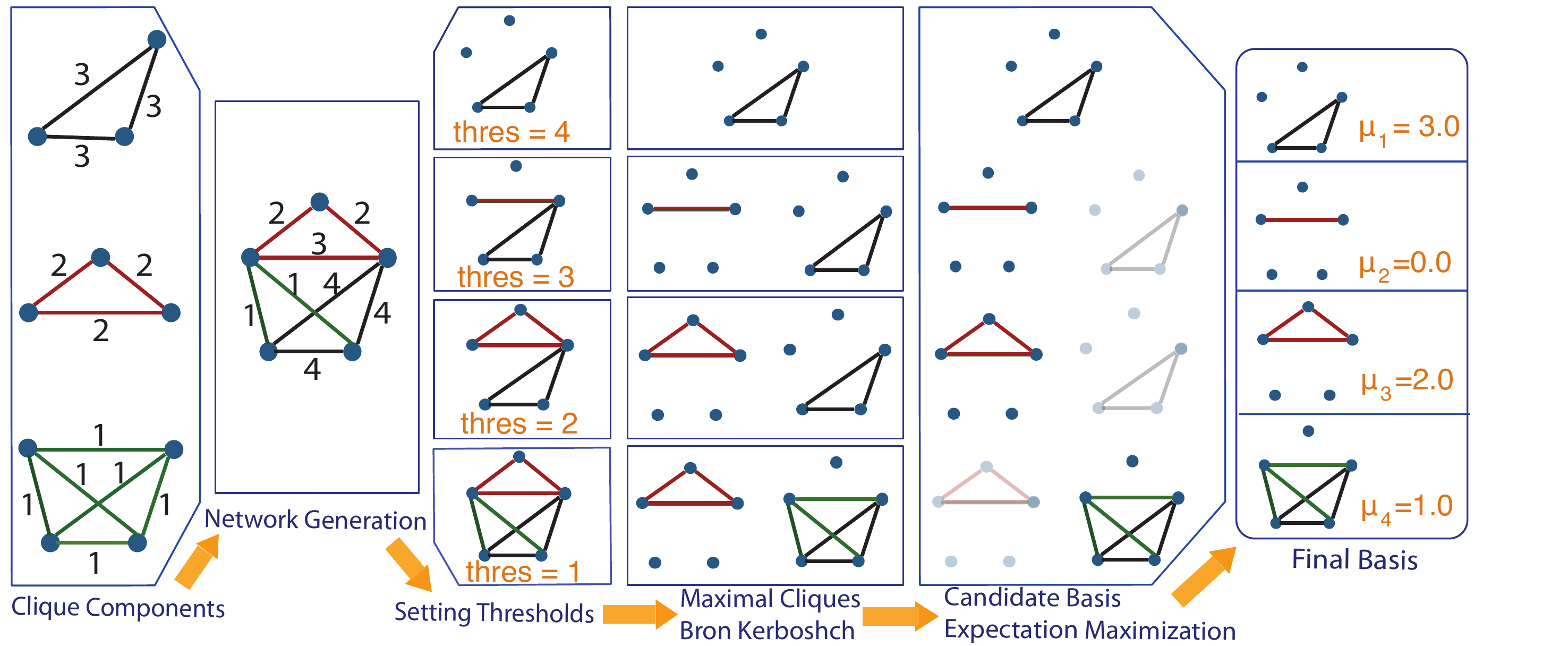}
  \caption{An illustration of the two-stage estimation algorithm for the graphlet decomposition.}\label{figregraph}
\end{figure*}


\subsubsection{Identifying candidate basis elements}

The algorithm to identify the candidate basis elements proceeds by thresholding the observed network $Y$ at a number of levels, $t=\max(Y)\dots\min(Y)$, and by identifying all the maximal cliques in the corresponding sequence of binary networks, $Y^{(t)}=\mathbf{1}(Y\ge t)$, using the Bron-Kerbosch algorithm \citep{BKcl}.
The resulting cumulative collection of $K^c$ maximal cliques found in the sequence of networks $Y^{(t)}$ is then turned into candidate basis elements, $P^c_i$, for $i=1\dots K^c$. Algorithm \ref{algo1} details this procedure.

\begin{algorithm}[ht!]
\label{algo1}
\textbf{\textit{}}\\
$B^c$ = empty basis set\\
\textbf{For}{ t = max$(Y)$ to min$(Y)$}\\
\ \ \ \ \ $Y^{(t)}=\textbf{1}(Y \ge t$)\\
\ \ \ \ \  C = maximal cliques($Y^{(t)}$) using Bron-Kerbosch\\
\ \ \ \ \ $B^c = B^c \cup C$\\
 \protect\smallskip
 \caption{Estimate a basis matrix, $B^c$, encoding candidate elements, $P^c$, from a weighted network, $Y$.}
\end{algorithm}

Algorithm \ref{algo1} allows us to avoid dealing with permutations of the input adjacency matrix $Y$ by inferring the basis elements $P_i$ independently of such permutation.

In Section \ref{sec:identifiability} we show that, under certain conditions on the true underlying basis matrix $B$, Algorithm  \ref{algo1} finds a set of candidate basis elements that contains the true set of basis elements.
In addition, in Section \ref{sec:complexity} we show that, under the same conditions on $B$ and additional realistic assumptions, we expect to have a number of candidate basis elements of the same order of magnitude of the true number of basis elements, $K^c = O(K)$, as the network size $N$ grows, given that the maximum weight is stationary, $\max(Y)=O(1)$.


\subsubsection{Sparse Poisson deconvolution}

Given the candidate set of basis elements encoded in $B^c$, we develop an EM algorithm to estimate the corresponding coefficients $\mu_{1:K^c}$. Under certain conditions on the true  basis matrix $B$, this algorithm consistently estimates the coefficients by zeroing out the coefficients associated with the unnecessary basis elements. 

Recall that we have $K^c$ candidate basis elements $P^c_{1:K^c}$. Let's define a set of statistics $T_k=\sum_{ij}P_{k,ij}^c$ for $k=1\dots K^c$, and let's introduce a set of latent $N\times N$ matrices $G_k$ with positive entries, subject to the only constrain that $\sum_k G_k = Y$.
Algorithm \ref{algo2} details the EM iterative procedure that will lead to maximum likelihood estimates for the coefficients $\mu_{1:K^c}$.
 
\begin{algorithm}[ht!]
Initialize $\mu^{(0)}$\\
\textbf{While} {$||\mu^{(t+1)}-\mu^{(t)}||> \epsilon$}\\
\ \ \ \ \ \ \textbf{For} \ {k=1 to K}\\
\ \ \ \ \ \ \ \ \ \ \ \ \   \textbf{E-step:} ${G_{k,ij}^{(t+1)}}=\mu_k^{(t)}{ Y_{ij} P_{k,ij}^c \over {T_k \sum_{m}\mu^{(t)}_m P_{m,ij}^c}}$\\
\ \ \ \ \ \ \ \ \ \ \ \ \  \textbf{M-step:} ${\mu_k^{(t+1)}}=\sum_{i,j}{G_{k,ij}^{(t+1)}}$;\\
\ \ \ \ \ \ t=t+1;
\protect\smallskip
 \caption{Estimate non-negative coefficients, $\mu$, for all candidate elements, $P^c$, in the basis matrix $B^c$.}
 \label{algo2}
\end{algorithm}

This is the Richardson-Lucy algorithm for Poisson deconvolution \citep{RWBB}. 
The truncated Poisson likelihood can also be maximized directly, using a KL divergence argument \citep{InfoCezar} involving   the discrete probability distribution obtained by normalizing the data matrix, $\bar Y$, and a linear combination of discrete probability distributions obtained by normalizing the candidate basis matrices, $\sum_k \omega_k \bar P_k$.


\section{Theory}
\label{sec:theory}

Here we develop some theory for graphlets when the graph $Y$ is generated by a non-expandable collection of basis matrices. The extent to which these results extend to the general case is unclear, as of this writing, however, the results help form some intuition  about how graphlets encode information. See the Appendix for details of the proofs.
We begin by establishing some notation.

\begin{deff}
\label{def01}
Given two adjacency matrices $P_1$ and $P_2$ both binary define $P_1 \subseteq P_2$ to mean $P_2(j,k)=1$ whenever $P_1(j,k)=1$; i.e. the induced graph by $P_1$ is a subgraph of $P_2$.
\end{deff}

\begin{deff}
\label{def02}
Define $P=\bigvee_{i\in I} P_i$ to mean $P(j,k)=1$ whenever any $P_i(j,k)=1$ and zero otherwise.
\end{deff}

\begin{deff}
\label{def03}
Given $\{P_k\}$, we say that $P'$ is an expansion of $P_k$ if both \ref{def01} and \ref{def02} are satisfied.
\end{deff}

\begin{deff}[non expandable basis]
\label{def1}
A collection of $\{P_k\}$ is non-expandable if for any expansion $P'$ of $P_k$ we can find $j$ such that $P'\subset P_j$.
\end{deff}

A consequence of the above definition is that, if a set of basis is non-expandable, then any subset of it is also non-expandable.


\subsection{Identifiability}
\label{sec:identifiability}

The first result asserts that the collection of Poisson rates $\Lambda$ can be uniquely decomposed into a collection of basis matrices $P_{1:K}$, here encoded by the equivalent binary matrix $B$, with weights $W = \hbox{diag}(\mu_{1:K})$.

\begin{thm}\label{thm1}
Let $\Lambda$ be a symmetric $N\times N$ nonnegative matrix which is generated by a non-expandable basis.  Then there exists a unique $N\times K$ matrix B and $K\times K$ matrix $W$ such that $\Lambda=BWB'$.
\end{thm}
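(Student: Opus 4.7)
The plan is to prove uniqueness by strong induction on the number of basis elements $K$ in the decomposition, exploiting the hierarchical structure that non-expandability imposes on the collection $\{P_k\}$.

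First I would observe that a non-expandable collection always has a unique ``top'' element. Applying Definition 4 to $P^\star := \bigvee_{i=1}^{K} P_i$, viewed as an expansion of any particular $P_k$, I get some index $j_0$ with $P^\star \subseteq P_{j_0}$; since $P_{j_0}$ appears as one of the $P_i$'s in the union defining $P^\star$, the reverse inclusion $P_{j_0} \subseteq P^\star$ also holds, so $P^\star = P_{j_0}$. Consequently the binary support of $\Lambda$ equals the clique $P_{j_0}$, and this top element is uniquely determined by $\Lambda$ alone, independent of which decomposition we started from.

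Next I would determine its coefficient $\mu_{j_0}$. Every edge $(i,j)$ in the support lies in $P_{j_0}$, so $\lambda_{ij} = \mu_{j_0} + \sum_{k \neq j_0,\; P_k(i,j)=1} \mu_k \ge \mu_{j_0}$. The crucial claim is that there exists at least one edge $(i^\star, j^\star)$ covered only by $P_{j_0}$, giving $\lambda_{i^\star j^\star} = \mu_{j_0}$. I expect this to follow from non-expandability: if every edge of $P_{j_0}$ were also covered by some strictly smaller $P_k$, then the union of those smaller basis elements would itself form an expansion of one of them that fails to be contained in any $P_j$ other than $P_{j_0}$ itself, exposing a redundancy that the condition is designed to preclude. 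Once $\mu_{j_0}$ is read off, I set $\Lambda' := \Lambda - \mu_{j_0} P_{j_0}$, which is generated by the reduced collection $\{P_k\}_{k \neq j_0}$. By the consequence stated immediately after Definition 4, this reduced collection remains non-expandable, so the inductive hypothesis yields uniqueness for $\Lambda'$, and combining with the uniqueness of $(P_{j_0}, \mu_{j_0})$ closes the argument. The base case $K=1$ is immediate: $\Lambda = \mu_1 P_1$ with $P_1$ the support clique and $\mu_1$ any nonzero entry.

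The hard part will be the existence of a signature edge exclusive to $P_{j_0}$ in Step 2. Without it, $\mu_{j_0}$ cannot be read off directly, and one would need to replace the ``peel off the top and recurse'' scheme by a global linear-algebraic argument: organize the basis elements into a poset under $\subseteq$, stack the vectorized $P_k$ as columns of a design matrix restricted to the support entries, and argue that non-expandability forces this matrix to have full column rank, so that $\Lambda = B W B'$ determines $W$ uniquely. The peeling argument is more transparent and should succeed when a signature edge always exists; the linear-algebraic fallback is a safety net for configurations in which every edge of a top clique happens to be multiply covered by smaller basis elements.
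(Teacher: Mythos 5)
Your Step 1 does not survive contact with the way the paper actually uses non-expandability, and since the whole recursion is anchored there, the argument cannot start. You apply the definition of expansion to $P^\star=\bigvee_{i=1}^{K}P_i$, but an ``expansion'' in the sense the paper needs (see how the property is invoked in the proof of Theorem~\ref{thm2} and in Lemma~\ref{lem0}) is a \emph{clique} sandwiched between some $P_k$ and the union of the basis elements; the full union $\bigvee_i P_i$ is not a clique in general, so it is not an admissible $P'$. If one takes the definitions as literally as you do, your conclusion $P^\star=P_{j_0}$ would force the entire support of $\Lambda$ to be a single clique containing all other basis elements --- which would make the model class degenerate and is plainly false for, say, a basis consisting of two disjoint triangles (a perfectly non-expandable collection with no ``top'' element). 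For such a $\Lambda$ your peel-from-the-top recursion has nothing to peel. The fix is to peel a clique that is maximal in the containment poset (which exists but is not unique and need not contain the others), at which point your scheme essentially reduces to the paper's.

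The second gap is the one you yourself flag: the existence of a signature edge exclusive to the clique being peeled. This is precisely the paper's Lemma~\ref{lem0}, and it is the load-bearing step of the uniqueness proof; your one-sentence gesture (``exposing a redundancy that the condition is designed to preclude'') is the statement to be proved, not a proof. The intended argument is: if every edge of $C$ were covered by some other basis element, then $C\subseteq\bigvee_{k\neq c}P_k$, so $C$ is itself an expansion of each of those $P_k$, and non-expandability would force $C\subseteq P_j$ for some $j\neq c$, contradicting the fact that distinct elements of a non-expandable collection cannot be nested. For comparison, the paper's own proof (Algorithm~\ref{algo0}) peels from the \emph{bottom} rather than the top: the globally minimum positive entry of $\Lambda'$ is necessarily attained at a signature edge of the clique with the smallest coefficient (any multiply covered edge has weight equal to a sum of several positive $\mu$'s, hence strictly exceeds the smallest single $\mu$), so both the clique and its coefficient can be read off and subtracted deterministically. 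Your linear-algebraic fallback is not developed enough to serve as a safety net.
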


This implies that, in the absence of noise, the set of parameters $B,W,K$ are estimable from the data $Y$.
The proof of Theorem \ref{thm1} is constructive by means of Algorithm \ref{algo0}, which is fast but sensitive to noise.
In practice, we propose a more robust algorithm that uses non-expandability to identify a collection of candidate basis elements $B^c$, which provably includes the unique non expandable basis $B$ that generated the graph.

\begin{thm}\label{thm2}
Let $\Lambda=BWB'$, where $W$ is a diagonal matrix with nonnegative entries and $B$ encodes a non-expandable basis, both unknown. If $B^c$ is the output of Algorithm \ref{algo1} with $\Lambda$ as input, then $B \subseteq B^c$.
\end{thm}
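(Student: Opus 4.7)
The plan is to show that every true basis element appears as a maximal clique at some threshold of Algorithm \ref{algo1}, so that it is added to $B^c$ during the sweep over $t$. For each column $k$ of $B$, let $V_k$ denote its vertex support, and set $t_k := \min\{\Lambda(i,j) : i,j \in V_k, i \ne j\}$. Because $P_k$ contributes weight $\mu_k$ to every one of its edges, every pair $i, j \in V_k$ satisfies $\Lambda(i,j) \ge t_k$, so $V_k$ is automatically a clique in $\Lambda^{(t_k)} = \mathbf{1}(\Lambda \ge t_k)$; assuming $\mu_k > 0$ also guarantees $t_k > 0$.

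The real work is arguing maximality. Suppose, toward a contradiction, that some $v \notin V_k$ satisfies $\Lambda(v,i) \ge t_k$ for every $i \in V_k$, so $V_k \cup \{v\}$ is a strictly larger clique in $\Lambda^{(t_k)}$. Since each $\Lambda(v,i) > 0$, for every $i \in V_k$ there is a basis element $P_{\ell_i}$ with $\{v, i\} \subseteq V_{\ell_i}$. I would then form
\[
  P' \;=\; P_k \;\vee\; \bigvee_{i \in V_k} P_{\ell_i},
\]
which contains $P_k$ and is a $\bigvee$ of basis elements; by Definitions \ref{def01}--\ref{def03} it is an expansion of $P_k$. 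Non-expandability of $B$ then furnishes a basis index $j$ with $P' \subseteq P_j$, and since $P'$ has edges incident to $v$ and to every vertex of $V_k$, we obtain $V_j \supseteq V_k \cup \{v\} \supsetneq V_k$.

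The main obstacle is converting ``$V_j \supsetneq V_k$'' into an outright contradiction. Two algebraic consequences are already in hand: since $P_j$ now covers every edge of $V_k$, $\Lambda(i_1,i_2) \ge \mu_k + \mu_j$ for $i_1, i_2 \in V_k$, so $t_k \ge \mu_k + \mu_j$; and since $P_j$ alone supplies only $\mu_j$ to each edge $(v,i)$ while we require $\Lambda(v,i) \ge t_k$, basis elements other than $P_j$ and $P_k$ must carry at least $\mu_k$ of additional weight on each $(v,i)$. My plan is to assemble these extra covers (together with $P_k$) into a representation of $P_j$ as a $\bigvee$ of basis elements different from $P_j$ itself, violating the non-redundancy encoded by Definition \ref{def1}. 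As a fallback, the construction can be iterated: each round produces a strictly larger basis element in an ascending chain $V_k \subsetneq V_j \subsetneq V_{j'} \subsetneq \cdots$, which must terminate by finiteness of the basis, and the terminal (inclusion-maximal) step triggers the contradiction because no further basis element exists to absorb the constructed expansion. The delicate point throughout is controlling the choice of the $\ell_i$'s so that the basis index $j$ produced at each step is genuinely new, which is what makes this the most technical piece of the argument.
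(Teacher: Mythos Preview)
Your argument stalls exactly where you flag it: after one application of non-expandability you obtain a basis index $j$ with $V_k\cup\{v\}\subseteq V_j$, i.e.\ $P_k\subseteq P_j$. That only tells you $j$ belongs to the set $G_k:=\{\ell:P_k\subseteq P_\ell\}$, which is no contradiction at all. Neither of your proposed escapes closes the gap. The ``assemble a redundant cover of $P_j$'' plan needs the extra cliques carrying the weight on the edges $(v,i)$ to cover \emph{every} edge of $P_j$, but you only control edges from $v$ into $V_k$; edges of $P_j$ lying entirely outside $V_k\cup\{v\}$ are untouched. The ascending-chain fallback requires, at the next step, a vertex $v'\notin V_j$ with $\Lambda(v',i)\ge t_k$ for all $i\in V_j$; nothing in your hypotheses produces such a $v'$, since the only over-threshold edges you know about are those inside $V_k\cup\{v\}$.

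The paper sidesteps this by choosing a different threshold. Define $G_k$ as above and set $th_k:=\sum_{\ell\in G_k}\mu_\ell$. Every edge of $P_k$ is covered by all cliques in $G_k$, so $V_k$ is still a clique in $\Lambda^{(th_k)}$. The point of this particular level is Lemma~\ref{lem2}: any edge whose weight meets $th_k$ must receive contribution from some clique \emph{outside} $G_k$, so $\Lambda^{(th_k)}\subseteq\bigvee_{i\notin G_k}P_i$. Hence a putative larger clique $P'\supsetneq P_k$ in $\Lambda^{(th_k)}$ already sits inside $\bigvee_{i\notin G_k}P_i$, and non-expandability now returns an index $j\notin G_k$ with $P'\subseteq P_j$; but $P_k\subset P'\subseteq P_j$ forces $j\in G_k$, an immediate contradiction. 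The missing idea in your write-up is precisely this set $G_k$ and its associated threshold, which pins the covering index to the complement of $G_k$ in one stroke instead of leaving you to chase an uncontrolled chain.
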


The two theorems above may be used to generate random cliques, for instance, by requiring the entries of the matrix $B$ to be IID Bernoulli random variables.


\subsection{Redundancy and complexity}
\label{sec:complexity}

Here, we derive an upper bound for the maximum number of candidate basis elements encoded in $B^c$ in a network with at most $K=c\log_2N$ cliques. 
Networks with this many cliques include networks generated with most exchangeable graph models \citep{gold:zhen:fien:airo:2010}, and  are the largest networks with an implicit representation \citep{IRGK}.

\begin{thm}\label{thm4}
Let the elements $B_{ik}$ of the basis matrix for a network $Y$ be IID Bernoulli random variables with parameter $p_N$. Then an asymptotic upper bound for the number of candidate basis elements, denoted $C_{N,p_N},$ identified by Algorithm \ref{algo1} is
\begin{eqnarray}
 C_{N,p_N} &\leq& Q (2^{KH(p_N)}+K) \nonumber \\
\label{eqnCN}
           &=   & Q (N^{c_1H(p_N)}+ c\log_2N ),
\end{eqnarray}
where $Q$ is the number of thresholds in Algorithm \ref{algo1}.
\end{thm}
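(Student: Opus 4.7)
The plan is to obtain a per-threshold bound of $2^{K H(p_N)}+K$ on the number of maximal cliques in $Y^{(t)}$, and then to multiply by the $Q$ thresholds traversed by Algorithm~\ref{algo1}.

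Fix a threshold $t$. The key observation is that two nodes with the same row of $B$ (the same ``type'' in $\{0,1\}^K$) are adjacent to exactly the same neighbours in $Y^{(t)}$, since the entries of $\Lambda=BWB'$ depend on $i,j$ only through $b_{i\cdot}$ and $b_{j\cdot}$. Consequently every maximal clique of $Y^{(t)}$ is a union of entire type-classes, so counting maximal cliques in $Y^{(t)}$ reduces to counting cliques in the much smaller type-quotient graph, whose vertex set has size at most the number $T$ of distinct types appearing among the $N$ rows of $B$.

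To bound $T$, I would apply a method-of-types / asymptotic equipartition argument. Since each row of $B$ is an IID Bernoulli$(p_N)^K$ vector, with probability tending to one every row has Hamming weight inside a vanishing neighbourhood of $K p_N$, and the number of binary strings of length $K$ with that weight is bounded by $\binom{K}{\lceil K p_N\rceil}\leq 2^{K H(p_N)}$ via the classical entropy estimate on binomial coefficients. Substituting $K=c\log_2 N$ turns this into the $N^{c H(p_N)}$ form appearing in~\eqref{eqnCN}.

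Finally I would split the maximal cliques found at threshold $t$ into two groups: the ``full'' basis cliques $P_1,\dots,P_K$, contributing at most $K$ per threshold (the $+K$ summand), and the remaining ``intersection'' cliques arising from nodes that share a strict subset of community memberships, which I would argue are in injective correspondence with a subfamily of the $T\leq 2^{K H(p_N)}$ distinct types. The main obstacle lies precisely in this last injection: two distinct intersection maximal cliques at the same threshold must be shown to produce two distinct type signatures, which I expect to follow from the maximality property enforced by Bron-Kerbosch together with a pruning argument that strips off common basis elements until a uniquely identifying type remains. Union-bounding over the $Q$ thresholds then yields $C_{N,p_N}\leq Q(2^{K H(p_N)}+K)$ as claimed.
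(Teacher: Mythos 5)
Your proposal shares two ingredients with the paper's argument: the factor of $Q$ obtained by handling thresholds one at a time (each clique is detected at most $Q$ times), and the asymptotic equipartition / entropy bound $2^{KH(p_N)}$ on the number of typical rows of $B$, with the $+K$ term reserved for the basis cliques themselves. However, the route you take through the type-quotient graph has a genuine gap, and it is exactly the step you flag as ``the main obstacle.'' Passing to the quotient graph bounds the number of \emph{vertices} of that graph by $T\leq 2^{KH(p_N)}$, but the quantity you need to control is the number of \emph{maximal cliques} of $Y^{(t)}$, which equals the number of maximal cliques of the quotient graph; a graph on $T$ vertices can have as many as $3^{T/3}$ maximal cliques, so the vertex count alone yields a bound doubly exponential in $KH(p_N)$, far weaker than the claim. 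The injection from maximal cliques to type signatures that you hope to extract from ``maximality plus a pruning argument'' is not a routine consequence of Bron--Kerbosch returning maximal cliques; it is the entire content of the theorem, and nothing in your sketch supplies it.

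The paper closes this hole differently: it argues that every clique detected at any threshold has the form $P'=\bigwedge_{l\in I}P_l$ for some index set $I\subseteq\{1,\dots,K\}$, i.e., it is the common-membership clique of a family of basis elements. The enumeration is therefore over index sets $I$ whose intersection is nonempty, not over cliques of an abstract quotient graph, and an $I$ can have nonempty intersection only if $I$ is contained in the support of some node's membership row $b_{i\cdot}$. The typical-set argument is then applied to these index sets (typical rows have about $Kp_N$ ones, and the nontypical ones contribute vanishingly many nonempty intersections), giving $Q(2^{KH(p_N)}+K)$ directly. If you want to repair your version, you would need to prove the structural claim that each maximal clique of $Y^{(t)}$ is determined by a set of commonly held basis memberships --- essentially the paper's $\bigwedge_{l\in I}P_l$ characterization --- at which point the quotient-graph detour becomes unnecessary.
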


While Theorem \ref{thm4} applies in general, a notion of redundancy of the candidate basis set is well defined only for networks generated by a non expandable basis, in which $B^c \subseteq B$. In this case, we define redundancy as the ratio $R_N\equiv C_{N,p_N}/K$. 
Theorem \ref{thm2} states that number of candidate basis elements is never smaller than the true number of basis $K$. Thus Theorem \ref{thm4} leads to the following upper bound on redundancy 
\begin{equation}
R_{N,p_N} \leq Q \left( 1+{N^{c_1H(p_N)} \over c\log_2N} \right).
\end{equation}

In a realistic scenario we may have  that 
 $p_N = O(1/\log_2 N)$, 
 the complexity of the candidate basis set is $O(Q\log_2 N)$,
 and the implied redundancy is at most $O(Q)$.
Alternatively, if $p_N = O(1/K)$, the implied redundancy is at most $O(QK)$.
These are regimes we often encounter in practice. These limiting behaviors of $p_N$ arise whenever the nodes in a network can be assumed to have a capacity that is essentially independent of the size of the network, e.g., individuals participate in the activities of a constant number of groups over time.

These calculations are only suggestive for networks generated by an expandable basis.
However, non expandability tends to be a reasonable approximation in many situations, including whenever exchangeable graph models are good fit for the network.
In the analysis of messaging patterns on Facebook in Section \ref{sec:fb}, for instance, we empirically observe $C_{N,p} \approx 2\hat K$.


\subsection{Accuracy with less than $K$ basis elements}
\label{appE}

The main estimation Algorithm \ref{algo2} recovers the correct number of basis elements $K$ and the corresponding coefficients $\mu$ and basis matrix $B$, whenever the true weighted network $Y$ is generated from a non expandable basis matrix.
Here we quantify the expected loss in reconstruction accuracy if we were to use $\tilde K < K$ basis elements to reconstruct $Y$.
To this end we introduce a norm for a network $Y$, and related metrics.

\begin{deff}[$\tau$-norm]
\label{def2}
Let $Y\sim Poisson(BWB')$, 
where $W=diag(\mu_1\dots\mu_K)$.
Define the statistics $a_k \equiv \sum^N_{i=1} B_{ik}$ for $k=1\dots K$.
The $\tau$-norm of $Y$ is defined as $\tau(Y) \equiv |\sum_{k=1}^K \mu_k a_k|$.
\end{deff}

Consider an approximation $\tilde Y = B\tilde{W}B'$ characterized by an index set $E \subset \{1\dots K\}$, which specifies the basis elements to be excluded by setting the corresponding set of coefficients $\mu_{E}$ to zero. 
Its reconstruction error is $\tau(Y-\tilde Y) =
 |\sum_{k\notin E} \mu_k  a_k|$,
and its reconstruction accuracy is $\tau(\tilde Y)/\tau(Y)$.
Thus, given a network matrix $Y$ representable exactly with $K$ basis elements, the best approximation with $\tilde K$ basis elements is obtained by zeroing out the lowest $K-\tilde K$ coefficients $\mu$.

We posit the following theoretical model,
\begin{eqnarray}
 \mu_k           & \sim & Gamma(\alpha+\beta,1) \\
 a_k/N             & \sim & Beta(\alpha,\beta) \\
 \mu_k \cdot a_k/N & \sim & Gamma(\alpha,1),
\end{eqnarray}
for $k=1\dots\tilde K$. This model may be used to compute the expected accuracy of an approximate reconstruction based on $\tilde K<K$ basis elements, since the magnitude of the ordered $\mu_k$ coefficients that are zeroed out are order statistics of a sample of Gamma variates \citep{OSEG}. Given $K$ and $\alpha$, we can compute the expected coefficient magnitudes and the overall expected accuracy $\tau_0$.

\begin{thm}
\label{thm-accuracy}
The theoretical accuracy of the best approximate Graphlet decomposition with $\tilde K$ out of $K$ basis elements is:
\begin{equation}
\tau_0(\tilde K,K, \alpha) = \sum^{\tilde K}_{j=1} {f(j,K,\alpha) \over \alpha K},
\end{equation}
where
\begin{eqnarray}
 f(j,K,\alpha)  &=& {K \choose j} \sum^{j-1}_{q=0} (-1)^q {j-1 \choose q} {f(1,K-j+q+1,\alpha)\over K-j+q+1} \\
 f(1,K,\alpha) &=& {K \over \Gamma(\alpha)} \sum^{(\alpha-1)(K-1)}_{m=0} c_m(\alpha,K-1){\Gamma(\alpha+m)\over K^{\alpha+m}},
\end{eqnarray}
in which the coefficients $c_m$ are defined by the recursion
\begin{equation}
 c_m(\alpha,q)=\sum^{i=\alpha-1}_{i=0} {1\over i!}c_{m-i}(\alpha,q-1)
\end{equation}
 with boundary conditions
 $c_{m}(\alpha,1)={1\over i!}$
 for $i=1\dots \alpha$.
\end{thm}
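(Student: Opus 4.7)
The plan is to reduce the expected accuracy to a computation about the order statistics of an i.i.d.\ $\hbox{Gamma}(\alpha,1)$ sample of size $K$, and then evaluate those order-statistic expectations in closed form by exploiting the polynomial structure of the Gamma survival function for integer shape $\alpha$.

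\textbf{Step 1 (reduction to order statistics).} Let $X_k = \mu_k a_k/N$, so by the posited model $X_1,\dots,X_K$ are i.i.d.\ $\hbox{Gamma}(\alpha,1)$, and $\tau(Y)=N\sum_k X_k$ with $\E[\tau(Y)]=NK\alpha$. An optimal rank-$\tilde K$ truncation keeps the $\tilde K$ largest coefficients, so $\tau(\tilde Y)=N\sum_{j=1}^{\tilde K}X_{(K-j+1)}$ where $X_{(1)}\le\cdots\le X_{(K)}$ are the order statistics. Defining the theoretical accuracy as the ratio of expectations gives
\[
\tau_0(\tilde K,K,\alpha)\;=\;\frac{1}{\alpha K}\sum_{j=1}^{\tilde K}\E\bigl[X_{(K-j+1)}\bigr],
\]
so it suffices to identify $f(j,K,\alpha)=\E[X_{(K-j+1)}]$ and to derive (a) a recursion that reduces the $j$-th largest to ``largest of a smaller sample,'' and (b) a closed form for that base case.

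\textbf{Step 2 (inclusion-exclusion to reduce $f(j,\cdot)$ to $f(1,\cdot)$).} Starting from the density of the $(K-j+1)$-th order statistic, $\tfrac{K!}{(K-j)!(j-1)!}F(x)^{K-j}(1-F(x))^{j-1}f(x)$, expand $(1-F)^{j-1}=\sum_{q=0}^{j-1}(-1)^q\binom{j-1}{q}F^q$. Each surviving term is proportional to $F(x)^{K-j+q}f(x)$, which up to a factor $1/(K-j+q+1)$ is exactly the density of the maximum of an i.i.d.\ sample of size $K-j+q+1$. Collecting the constants and using $\tfrac{K!}{(K-j)!(j-1)!}/(K-j+q+1) = \binom{K}{j}\cdot\tfrac{1}{K-j+q+1}$ up to the necessary combinatorial factor yields the stated recursion
\[
f(j,K,\alpha)\;=\;\binom{K}{j}\sum_{q=0}^{j-1}(-1)^q\binom{j-1}{q}\,\frac{f(1,K-j+q+1,\alpha)}{K-j+q+1}.
\]
This step is a standard order-statistics identity.

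\textbf{Step 3 (base case via polynomial expansion).} For integer $\alpha$ the survival function factors as $1-F(x)=e^{-x}P(x)$ with $P(x)=\sum_{i=0}^{\alpha-1}x^i/i!$, and $f(x)=x^{\alpha-1}e^{-x}/\Gamma(\alpha)$. Writing $f(1,K,\alpha)$ as a single integral involving $f(x)$ and $(1-F(x))^{K-1}$, substitute $(1-F)^{K-1}=e^{-(K-1)x}P(x)^{K-1}$, expand the polynomial power as $P(x)^{K-1}=\sum_{m=0}^{(\alpha-1)(K-1)}c_m(\alpha,K-1)x^m$, and integrate term by term using $\int_0^\infty x^{\alpha+m-1}e^{-Kx}\dd x=\Gamma(\alpha+m)/K^{\alpha+m}$. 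The recursion for the coefficients follows from $P(x)^q = P(x)\cdot P(x)^{q-1}$: the sequence $\{1/i!\}_{i=0}^{\alpha-1}$ convolved with $c_{\cdot}(\alpha,q-1)$ gives $c_m(\alpha,q)=\sum_{i=0}^{\alpha-1}c_{m-i}(\alpha,q-1)/i!$, with boundary $c_m(\alpha,1)=1/m!$ for $0\le m\le\alpha-1$, matching the statement.

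\textbf{Main obstacle.} The hard part is bookkeeping rather than conceptual depth: tracking signs and binomial coefficients through the inclusion-exclusion of Step 2, and verifying the exact truncation index $(\alpha-1)(K-1)$ on the outer sum as well as the convolution boundary conditions for $c_m$ in Step 3. A subtler (conceptual) point is that $\tau_0$ has been defined as a ratio of expectations rather than the expectation of a ratio; the proof should either adopt this as the definition of theoretical accuracy or supply a concentration argument for large $K$ under which the two agree to leading order.
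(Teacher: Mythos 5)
You should know at the outset that the paper contains no proof of this theorem to compare against: the appendix proves Theorems 1, 2 and 4 only, and the text merely points to the order-statistics literature (Shawky and Bakoban, 2009). Your overall strategy --- identify $f(j,K,\alpha)$ with an expected Gamma order statistic, use inclusion--exclusion to reduce the $j$-th extreme to the extreme of a smaller sample, and exploit $1-F(x)=e^{-x}\sum_{i=0}^{\alpha-1}x^i/i!$ for integer $\alpha$ to get a finite closed form --- is surely the intended derivation, and your Step 3 account of the $c_m$ recursion as the convolution defining the coefficients of $P(x)^{q}$ is correct. Your closing caveat about ratio-of-expectations versus expectation-of-a-ratio is also a legitimate point that the paper glosses over.

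However, there are three concrete gaps in your write-up. First, the combinatorial constant: the density of the $j$-th largest (or $j$-th smallest) order statistic carries $\tfrac{K!}{(K-j)!\,(j-1)!}=j\binom{K}{j}$, not $\binom{K}{j}$; writing ``up to the necessary combinatorial factor'' does not discharge the extra factor of $j$, and you need to either show it cancels (it does not in the standard identity) or state that the theorem's constant disagrees with the standard formula. Second, Steps 2 and 3 are mutually inconsistent: Step 2 reduces everything to the law of the \emph{maximum}, whose density kernel is $F^{n-1}f$, but Step 3 integrates against $(1-F)^{n-1}f$, which is the kernel of the \emph{minimum}; only the latter produces the single sum $e^{-(K-1)x}P(x)^{K-1}$ appearing in the theorem, whereas the maximum would require a further binomial expansion of $\bigl(1-e^{-x}P(x)\bigr)^{K-1}$ and a double sum. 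The consistent route is to work with the $j$-th smallest throughout (expanding $F^{j-1}$ into powers of $1-F$), but then the indexing must be reconciled with Step 1's claim that the optimal truncation keeps the $\tilde K$ \emph{largest} coefficients. Third, the base-case integral you quote, $\int_0^\infty x^{\alpha+m-1}e^{-Kx}\,\mathrm{d}x=\Gamma(\alpha+m)/K^{\alpha+m}$, omits the factor of $x$ needed for an expectation: as written, $f(1,K,\alpha)=K\int_0^\infty(1-F)^{K-1}f\,\mathrm{d}x\equiv 1$ identically, which cannot equal $\E[X_{(\cdot)}]$. You have faithfully reproduced the theorem's displayed formula, but a proof must either insert the missing moment (giving $\Gamma(\alpha+m+1)/K^{\alpha+m+1}$) or explain what quantity the stated formula actually computes; at present your Steps 1--2 and your Step 3 prove different statements.
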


Figure \ref{figa} illustrates this result on simulated networks.
The solid (red) line is the theoretical accuracy computed for $K=30$ and $\alpha=1$, the relevant parameters used to to simulate the sample of 100 weighted networks. The (blue) boxplots summarize the empirical accuracy at a number of distinct values of $\tilde{K}/K$.

\begin{figure}[t!]
\centering
\includegraphics[width=0.85\textwidth]{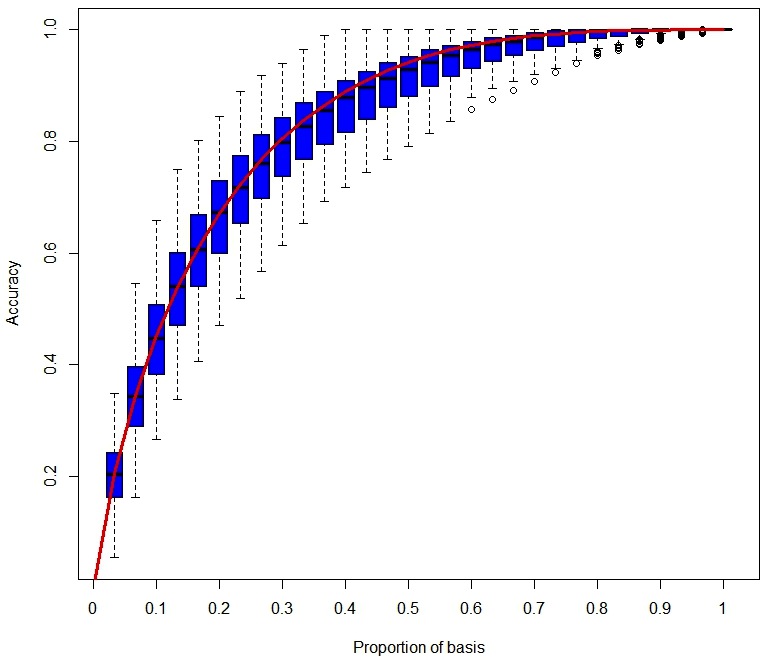}
  \caption{Theoretical and empirical accuracy for different fractions of basis elements $\tilde K / K$ with $\alpha=0.1$. The ratio $\tilde K / K$ also provides a measure of sparsity.}\label{figa}
\end{figure}


\section{Results}

We evaluate graphlets on real and simulated data.
Simulation results on weighted networks in Section \ref{sec:results_estimation} show that the estimation problem is well-posed and both the binary matrix $B$ and the coefficients $\mu$ are estimable without bias, while results on binary networks in Section \ref{sec31} enable the analysis of the comparative performance with existing methods for binary networks.
We also develop an analysis of messaging patterns on Facebook for a number of US colleges and an analysis of historical crime data in Sections  \ref{sec:fb}--\ref{reshis}.

Overall, these results suggest that graphlets encode a new quantification of social information; we provide a concrete illustration of this idea in Section \ref{sec:social_info}.


\subsection{Parameter estimation}
\label{sec:results_estimation}

Here, we evaluate whether the parameters underlying graphlets---the entries of binary matrix $B$, its size $K$, and the non-negative coefficients $\mu$---are estimable from 1 weighted network sample $Y$.

We simulated $M=100$ networks from the model, each with $50$ nodes, using random values for the parameters $\mu,B$ and $K$.
For the $i$-th network, 
 the number of basis elements $K_i$ was sampled from a Poisson distribution with rate $\lambda=30$. 
 The coefficients $\mu_{ji}$ were sampled from a Gamma distribution with parameters $\alpha=1$ and $\beta=10$.
 The entries of the basis matrix $B$ were sampled from a Bernoulli distribution with probability of success $p=0.04$.
The index $i=1\dots M$ runs over the networks and the index $j=1\dots K_i$ runs over the basis elements for the $i$-th network.

Algorithms \ref{algo1}--\ref{algo2} were used to estimate the parameters $B,\mu$ and $K$, for each synthetic network.
The estimation error of $\hat K$ is $|\hat K - K|$.
Since the estimated matrix $\hat B$ is unique only up to a permutation of its columns, the Procrustes transform \citep{PROC} was used to identify the optimal rotation of the matrix $\hat B$ to be compared to the true matrix $B$.
The estimation error of $\hat B$ is  computed as the normalized $L_2$ distance between $B$ and $\hat{B}$ after Procrustes.
The estimation error of $\hat \mu$ is computed as the normalized $L_2$ distance between $\mu$ and $\hat\mu$.
The average error in reconstructing the simulated networks is computed as the $L_1$ distance between $Y$ and $\hat Y$ normalized to the total number of counts in the network, averaged over 100 simulations, denoted $|\tilde Y|_1$. This metric gives more weight to larger cliques, since the clique size enters as a quadratic term.
We also provide the average error in reconstructing the simulated networks based on the $\tau$-norm, which is less sensitive to incorrectly estimating larger cliques.
In addition, we compute the average error in reconstructing whether or not edges have positive weights, rather than their magnitude, denoted error in $\mathbf{1}(\tilde Y>0)$.
 
Table \ref{table1} summarizes the simulation results for weighted network reconstructions obtained by setting a target accuracy $\tau_0$, ranging from 0.20 to 1.00---no error.
The last row of Table \ref{table1} supports the claim that the estimation problem is well-posed; the estimation algorithm stably recovers the true parameter values.
The only sources of non-zero error are the estimation of the number of basis elements and the estimation of the basis elements themselves, as encoded by the matrix $B$. However, these errors are negligible and do not seem to impact the estimation of the coefficients $\mu$, nor to propagate to the network reconstruction independently of the metric we use to quantify it.

The average bias in estimating the optimal number of basis elements $K$ is 0.07, with a standard deviation of 0.256.
Column $\tilde K/ \hat K$ provides the fraction of the estimated optimal number of basis elements $\hat K$ that is necessary to achieve the desired target accuracy, in terms of the $\tau$-norm, using Theorem \ref{thm-accuracy}. Column two provides the empirical $\tau(\tilde Y)$ error, defined as one minus the accuracy, corresponding to the reported fraction of basis elements.
The empirical accuracy matches the expected accuracy given by Theorem \ref{thm-accuracy} well.
For instance, the Theorem \ref{thm-accuracy} implies that we need 45\% of the basis elements in order to achieve an accuracy of 0.85, and the empirical accuracy is slightly in excess of 0.85, on average.
Figure \ref{figa} shows similar results, for a larger set of 29 values for the target accuracy $\tau_0$.

The first two columns of Table \ref{table1} report the average $L_1$ error and the average $\tau$-based error, properly normalized to have range in zero to one. Cliques in the simulated networks range between 2 and 5 nodes, for most networks. The differences in the reconstruction error are due to the weight of these larger cliques. For instance, the empirical $L_1$ accuracy obtained by using 45\% of the basis elements is around 0.65, as opposed to a $\tau$ accuracy of around 0.85, on average.

\begin{table*}[t!]
\caption{Performance of the estimation algorithm for a target accuracy $\tau_0$, on synthetic data.}
\label{table1}
\begin{center}
\begin{tabular}{c c c c c c c c}
\hline
\aroundspace
 $|\tilde Y|_1$ error & $\tau(\tilde Y)$ error & error in $\mathbf{1}(\tilde Y>0)$ & error in $B$ & error in $\mu$ & $\tilde K/ \hat K$ & $\tau_0$ \\
\hline
\abovespace
 $0.97\pm0.012$ & $0.65\pm0.074$ & $0.76\pm0.062$ & $50.75\pm10.893$ & $0.55\pm0.082$ & $0.08\pm0.035$ & $0.20$ \\
 $0.92\pm0.016$ & $0.46\pm0.026$ & $0.61\pm0.053$ & $42.40\pm~\,8.134$ & $0.32\pm0.041$ & $0.16\pm0.039$ & $0.50$ \\
 $0.79\pm0.036$ & $0.23\pm0.012$ & $0.40\pm0.057$ & $29.14\pm~\,5.536$ & $0.10\pm0.015$ & $0.33\pm0.050$ & $0.75$ \\
 $0.66\pm0.054$ & $0.14\pm0.008$ & $0.30\pm0.053$ & $21.72\pm~\,4.441$ & $0.04\pm0.009$ & $0.45\pm0.060$ & $0.85$ \\
 $0.56\pm0.063$ & $0.09\pm0.006$ & $0.24\pm0.048$ & $17.04\pm~\,3.757$ & $0.02\pm0.004$ & $0.54\pm0.062$ & $0.90$ \\
 $0.41\pm0.064$ & $0.04\pm0.004$ & $0.17\pm0.041$ & $11.36\pm~\,2.980$ & $0.00\pm0.002$ & $0.67\pm0.061$ & $0.95$ \\
 $0.19\pm0.051$ & $0.01\pm0.002$ & $0.07\pm0.024$ & $~\,4.46\pm~\,1.646$ & $0.00\pm0.000$ & $0.85\pm0.043$ & $0.99$ \\
\belowspace
 $0.00\pm0.000$ & $0.00\pm0.000$ & $0.00\pm0.000$ & $~\,0.67\pm~\,0.618$ & $0.00\pm0.000$ & $1.01\pm0.033$ & $1.00$ \\
\hline
\end{tabular}
\end{center}
\end{table*}

The accuracy in reconstructing the presence or absence of edges in the network, rather than the magnitude of the corresponding weights, is in between the $L_1$ accuracy and the $\tau$ accuracy, for any fraction $\tilde K/ \hat K$.


\subsection{Link reconstruction}
\label{sec31}

Graphlet is a method for decomposing a weighted network. 
However, here we evaluate how graphlet fares in terms of the computational complexity versus accuracy trade-off when compared to the performance current methods for binary networks can achieve.

For this purpose, we generated 100 synthetic networks from the model, each with $100$ nodes, choosing random values for the parameters as we did in Section \ref{sec:results_estimation}.
 The number of basis was sampled from a Poisson  with rate $\lambda=12$. 
 The coefficients were sampled from a Gamma  with parameters $\alpha=2$ and $\beta=10$.
 The entries of the basis matrix were sampled from a Bernoulli  with probability of success $p=0.1$.
We then obtained the corresponding binary networks by resetting positive edge weights to one, $\mathbf{1}(Y>0)$.

\begin{table}[t!]
\caption{Link reconstruction accuracy and runtime for a number of competing methods, on simulated data. The graphlet decomposition is denoted Glet, with the fraction of basis elements used quoted  in brackets.}
\label{table2}
\begin{center}
\begin{tabular}{ l c r }
\hline
\aroundspace
Method & Runtime (sec) & Accuracy \\
\hline
\abovespace
{Glet (25\%)}  & $ 0.0636$ & $ 92.7 \pm 0.30$ \\
{Glet (50\%)}  & $ 0.0636$ & $ 94.7 \pm 0.15$ \\
{Glet (75\%)}  & $ 0.0636$ & $ 97.0 \pm 0.08$ \\
{Glet (90\%)}  & $ 0.0636$ & $ 98.9 \pm 0.05$ \\
\belowspace
{Glet (100\%)} & $ 0.0636$ & $100.0 \pm 0.00$ \\
\hline
\abovespace
{ERG}          & $ 0.0127$ & $ 86.0 \pm 1.20$ \\
{DDS}          & $11.1156$ & $ 89.0 \pm 0.85$ \\
{LSCM}         & $ 6.3491$ & $ 90.0 \pm 0.70$ \\
\belowspace
{MMSB}         & $ 2.8555$ & $ 93.0 \pm 0.40$ \\
\hline
\end{tabular}
\end{center}
\end{table}

We compute comparative performance for a few popular models of networks that can be used for link prediction.
These include 
 the Erd\"os-R\'enyi-Gilbert random graph model \citep{Erdo:Reny:1959,Gilb:1959}, denoted ERG,
 a variant of the degree distribution model fitted with importance sampling \citep{blit:diac:2010}, denoted DDS,
 the latent space cluster model \citep{Hand:Raft:Tant:2007}, denoted LSCM,
 and the stochastic blockmodel with mixed membership \citep{Airo:Blei:Fien:Xing:2008}, denoted MMSB.
A comparison with singular value decomposition in terms binary link prediction is problematic, since using a few singular vectors would not lead to binary edge weights.

Table \ref{table2} summarizes the comparative performance results.
The accuracy of graphlet is reported for different fractions of the estimated optimal number of basis elements $\hat K$, ranging from 25\% to 100\%---no error.
The accuracy is consistently high, while the running time is a only a fraction of a second, independent of the desired reconstruction accuracy. This is because the complexity of computing the {\em entire} graphlet decomposition is linear in the number of edges present in the network. Thus specifying a smaller fraction of the number of basis elements is a choice based on storage considerations, rather than on runtime.
The comparative performance results suggest that graphlet decomposition is very competitive in predicting links.

This binary link prediction simulation study is one way to compare graphlet to interesting existing methods, which have been developed for binary graphs.



\subsection{Analysis of messaging on Facebook}
\label{sec:fb}

Here we illustrate graphlet with an application to messaging patterns on Facebook.
We analyzed the number of public wall-post on Facebook, over a three month period, among students of a number of US colleges. While our data is new, the US colleges we selected have been previously analyzed \citep{trau:kels:much:port:2011}.

\begin{table*}
\caption{$\tau(\tilde Y)$ error for different fractions of the estimated optimal number of basis elements $\hat K$.}\label{table3}
\begin{center}
\begin{tabular}{l r r r r r r r r r r r r}
\hline
\aroundspace
college & nodes & edges & $\hat K$ & sec & 10\% & 25\% & 50\% & 75\% & 90\% & 100\% \\ \hline
\abovespace
 {American} &  {6386}& {435323} &  {11426} & {151}& {13.5}& {6.5} &  {1.80} & {.70}& {.30}& {0}\\
 {Amherst} &  {2235}& {181907} &  {10151} & {124}& {6.5}& {2.3} &  {.84} & {.33}& {.14}& {0}\\
 {Bowdoin} &  {2252}& {168773} &  {9299} & {113}& {9.0}& {3.2} &  {1.17} & {.41}& {.17}& {0}\\
 {Brandeis} &  {3898}& {275133} &  {10340} & {116}& {6.8}& {2.9} &  {1.21} & {.53}& {.25}& {0}\\
 {Bucknell} &  {3826}& {317727} &  {13397} & {193}& {7.8}& {2.9} &  {1.15} & {.45}& {.20}& {0}\\
 {Caltech} &  {769}& {33311} &  {3735} & {51}& {5.7}& {1.8} &  {.65} & {.27}& {.11}& {0}\\
 {CMU} &  {6637}& {499933} &  {11828} & {169}& {14.9}& {5.2} &  {1.89} & {.73}& {.34}& {0}\\
 {Colgate} &  {3482}& {310085} &  {12564} & {151}& {8.0}& {3.3} &  {1.26} & {.45}& {.19}& {0}\\
 {Hamilton} &  {2314}& {192787} &  {11666} & {200}& {6.9}& {2.5} &  {.84} & {.33}& {.15}& {0}\\
{Haverford76} &  {1446}& {119177} &  {9021} & {128}& {4.8}& {2.2} &  {.74} & {.25}& {.10}& {0}\\
 {Howard} &  {4047}& {409699} &  {12773} & {170}& {8.6}& {3.7} &  {1.55} & {.60}& {.28}& {0}\\
 {Johns Hopkins} &  {5180}& {373171} &  {11674} & {150}& {10.8}& {3.7} &  {1.40} & {.58}& {.29}& {0}\\
 {Lehigh} &  {5075}& {396693} &  {14076} & {206}& {9.5}& {3.2} &  {1.14} & {.49}& {.23}& {0}\\
 {Michigan} &  {3748}& {163805} &  {5561} & {54}& {11.4}& {4.6} &  {1.92} & {.76}& {.35}& {0}\\
 {Middlebury} &  {3075}& {249219} &  {9971} & {109}& {9.7}& {3.5} &  {1.35} & {.49}& {.22}& {0}\\
 {MIT} &  {6440}& {502503} &  {13145} & {191}& {11.5}& {4.7} &  {1.68} & {.65}& {.30}& {0}\\
 {Oberlin} &  {2920}& {179823} &  {7862} & {84}& {9.8}& {3.8} &  {1.48} & {.57}& {.26}& {0}\\
 {Reed} &  {962}& {37623} &  {3911} & {46}& {6.0}& {2.3} &  {.99} & {.40}& {.17}& {0}\\
 {Rice} &  {4087}& {369655} &  {12848} & {155}& {8.7}& {3.2} &  {1.25} & {.51}& {.22}& {0}\\
 {Rochester} &  {4563}& {322807} &  {10824} & {124}& {11.0}& {3.7} &  {1.43} & {.56}& {.26}& {0}\\
 {Santa} &  {3578}& {303493} &  {11203} & {127}& {10.3}& {3.5} &  {1.30} & {.51}& {.24}& {0}\\
 {Simmons} &  {1518}& {65975} &  {5517} & {60}& {6.4}& {2.5} &  {1.04} & {.44}& {.19}& {0}\\
 {Smith} &  {2970}& {194265} &  {8591} & {102}& {5.5}& {2.5} &  {1.15} & {.49}& {.23}& {0}\\
 {Swarthmore} &  {1659}& {122099} &  {7856} & {96}& {6.3}& {2.6} &  {1.06} & {.44}& {.20}& {0}\\
 {Trinty} &  {2613}& {223991} &  {10832} & {131}& {8.6}& {3.0} &  {1.07} & {.40}& {.18}& {0}\\
 {Tufts} &  {6682}& {499455} &  {14641} & {212}& {13.9}& {4.8} &  {1.68} & {.65}& {.30}& {0}\\
 {UC Berkeley} &  {6833}& {310663} &  {7715} & {105}& {16.3}& {6.2} &  {2.28} & {.90}& {.42}& {0}\\
 {U Chicago} &  {6591}& {416205} &  {12326} & {176}& {14.2}& {4.7} &  {1.74} & {.66}& {.31}& {0}\\
 {Vassar} &  {3068}& {238321} &  {11344} & {134}& {9.3}& {3.2} &  {1.18} & {.47}& {.21}& {0}\\
 {Vermont} &  {7324}& {382441} &  {10030} & {145}& {17.5}& {5.4} &  {2.05} & {.82}& {.36}& {0}\\
 {USFCA} &  {2682}& {130503} &  {6735} & {67}& {10.5}& {3.8} &  {1.50} & {.60}& {.26}& {0}\\
 {Wake Forest} &  {5372}& {558381} &  {15580} & {211}& {11.5}& {4.2} &  {1.46} & {.58}& {.25}& {0}\\
 {Wellesley} &  {2970}& {189797} &  {9768} & {107}& {8.9}& {3.3} &  {1.24} & {.48}& {.22}& {0}\\
\belowspace
 {Wesleyan} &  {3593}& {276069} &  {10506} & {118}& {9.9}& {3.6} &  {1.40} & {.54}& {.25}& {0}\\
\hline 
\end{tabular}
\end{center}
\end{table*}

Table \ref{table3} provides a summary of the weighted network data and of the results of the graphlet decomposition.
Salient statistics for each college include the number of nodes and edges.
The table reports the number of estimated basis elements $\hat K$ for each network and the runtime, in seconds.
The $\tau(\tilde Y)$ error incurred by using a graphlet  decomposition is reported for different fractions of the estimated optimal number of basis elements $\hat K$, ranging from 10\% to 100\%---no error.

Overall, these results suggest that the compression of wall-posts achievable on collegiate network is substantial.
A graphlet decomposition with about 10\% of the optimal number of basis elements already leads to a reconstruction error of 10\% or less, with a few exceptions. Using 25\% of the basis elements further reduces the reconstruction error to below 5\%.


\subsection{Analysis of historical crime data}
\label{reshis}

More recently, we have used the graphlet decomposition to analyze crime associations records from the 19th century.
These records are available at the South Carolina Department of Archives and History (Columbia, SC), Record Group 44, Series L 44158, South Carolina, Court of General Sessions (Union County) Indictments 1800-1913. 
Nodes in the network are 6221 residents of Union County, during the years 1850 to 1880.
Edge weights indicate the number of crimes involving pairs of residents.
Details on the structure of this data set and  its historical relevance may be found in \cite{EFPM,EFPK}.
\begin{figure}
 \centering
  \includegraphics[width=0.75\textwidth]{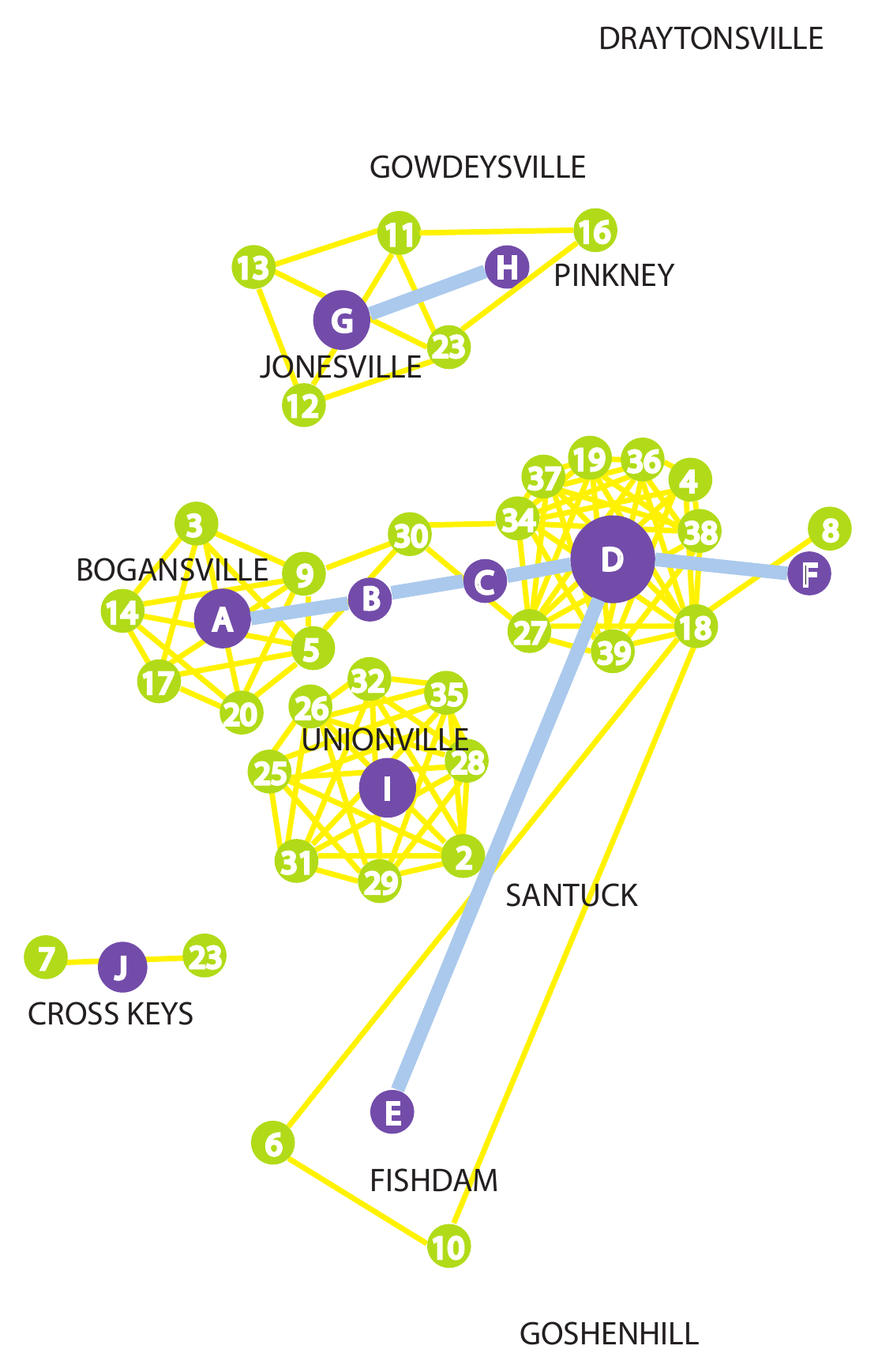}
  \caption{A two-step analysis of the Union County criminal association network using graphlets reveals tight pockets of criminality (the numbered nodes) and local criminal communities (the lettered nodes).}\label{fig8}
\end{figure}

A first step in the analysis was to explore the degree to which there is an optimal level of granularity, for the observed associations, at which a non-trivial criminal structure emerges.
We pursued this analysis by fitting graphlets to the raw criminal association network raised to different powers, $Y^k$ for $k=1\dots 5$.
We found that $Y^2$ contained a substantial degree of criminal social structure. 
At this resolution, $k=2$, two residents are associated with crimes in which they are either directly involved, or in which  they are involved indirectly  through their direct criminal associates.
The results suggest that there are several tight pockets of criminality in Union County.

We were interested in developing an interpretation for these pockets of criminality.
One compelling hypothesis is that each pocket captures a gang or a clan. If this holds, we might expect the variation in the gangs' relations to be explained by the spatial organization of the townships they operate in, to a large degree.

To test this hypothesis, we built a meta network of criminal associations, in which each gang identified in the first step of the analysis is represented by a meta node. 
Two or more gangs were connected in the meta network whenever a criminal record involved at least one of their members.
A graphlet decomposition of the meta network identified pockets criminal activity at the gang level.
Plotting these gang level criminal associations reveals the townships' spatial organization, to a large degree.
Figure \ref{fig8} shows the pockets of criminality identified by the graphlet analysis of the original criminal associations, $Y^2$, as green nodes associated with numbers, as well as the pockets of criminality identified by the graphlet analysis of the meta network, as  purple nodes associated with letters.

In summary, we were abel to successfully use graphlets  as an exploratory tool to capture and distill an interesting, multi-scale organization among criminals in 19th century Union County, from historical records.

\section{Discussion}

Taken together, our results suggest that the graphlet decomposition implies a new notion of social information, quantified in terms of multi-scale social structure. We  explored this idea with a simulation study. 
 

\subsection{A new notion of ``social information''}
\label{sec:social_info}

Graphlet quantifies social information in terms of community structure (i.e., maximal cliques) at multiple scales and possibly overlapping.
To illustrate this notion of social information, we simulated 200 networks: 100 Erd\"os-R\'enyi-Gilbert random graphs with Poisson weights and 100 networks from the model described in Section \ref{secmod}.
While, arguably, the Poisson random graphs do not contain any such information,  the data generating process underlying graphlets was devised  to translate such information into edge weights.

As a baseline for comparison, we consider the singular value decomposition \citep{sear:2006}, 
applied to the symmetric adjacency matrices with integer entries encoding the weighted networks.
The SVD summarizes edge weights  using orthogonal eigenvectors $v_i$ as basis elements and the associated eigenvalues $\lambda_i$ as weights, $Y_{N\times N} = \sum_{i=1}^N \lambda_i \, v_i \, v_i'$.
However, SVD basis elements are not interpretable in terms of community structure, thus SVD should not be able to capture the notion of social information we are interested in quantifying.

\begin{figure*}[b!]
\centering
\includegraphics[width=.495\textwidth]{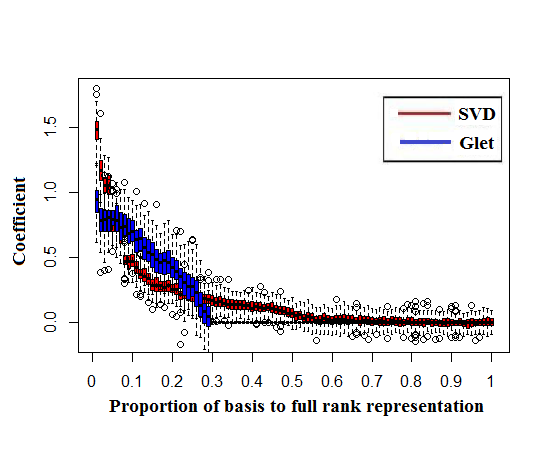}
\includegraphics[width=.495\textwidth]{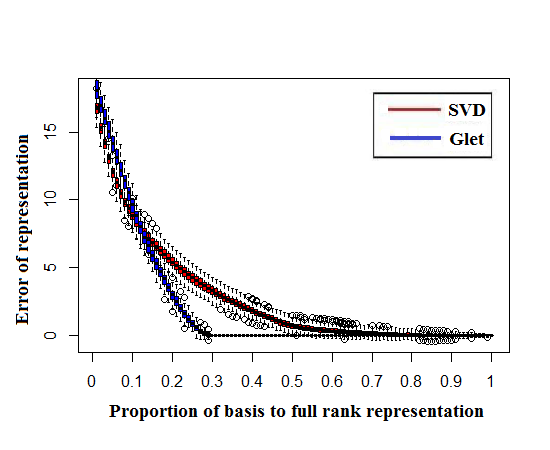}
\includegraphics[width=.495\textwidth]{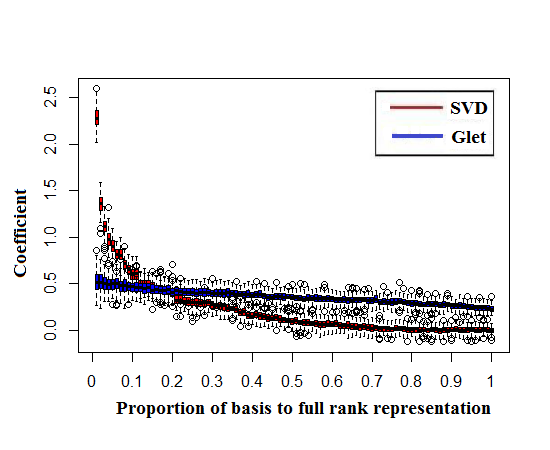}
\includegraphics[width=.495\textwidth]{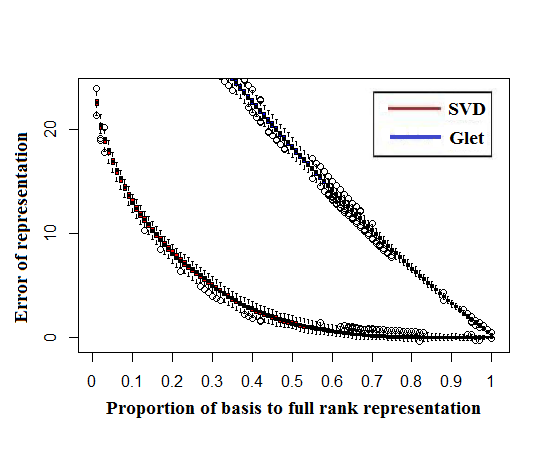}
\caption{Comparison between  graphlet and  SVD decompositions. Top panels report results for on the networks simulated from the model underlying graphlets. Bottom panels  report results on the Poisson random graphs.}\label{fig:BSDstructure1}
\end{figure*}

We applied graphlets and SVD to the two sets of networks we simulated.
Figure \ref{fig:BSDstructure1} provides an overview of the results.
Panels in the top row  report results for on the networks simulated from the model underlying graphlets. 
Panels in the bottom row report results on the Poisson random graphs. 
In each row,
 the left panel shows  the box plots of the coefficients associated with each of the basis elements, for graphlets in blue and for SVD in red.
 The right panel shows  the box plots of the cumulative representation error as a function of the number of basis elements utilized. 
Graphlets coefficients decay more slowly than SVD coefficients on Poisson graphs (bottom left).
Because of this, the error in reconstructing Poisson graphs achievable with graphlets is consistently worse then the error achievable with SVD (bottom right).
In contrast, graphlets coefficients decay rapidly to zero on networks with social structure, much sharply then the SVD coefficients (top left).
Thus, the reconstruction error achievable with graphlets is consistently better then the error achievable with SVD (top right).

These results support our claim that graphlets is able to distill and quantify a notion of social information in terms of social structure.


\subsection{Concluding remarks}

The graphlet decomposition of a weighted network is an exploratory tool,
 which is  based on a simple but compelling statistical model, 
 it is amenable to theoretical analysis,
 and it scales to large weighted networks.

SVD is the most parsimonious orthogonal decomposition of a data matrix $Y$ \citep{Hast:Tibs:Frie:2001}.
Graphlets abandon the orthogonality constraint to attain interpretability of the basis matrix in terms of social information.
In the presence of social structure, graphlet is a better summary of the variability in the observed connectivity. 
SVD always achieves a lower reconstruction error, however, when a few basis elements are considered (see Figure \ref{fig:BSDstructure1}) suggesting that orthogonality induces a more useful set of constraints whenever a very low dimensional representation is of interest.

Graphlets provide a parsimonious sumary of complex social structure, in practice. 
 The simulation studies in Sections \ref{sec:results_estimation} and \ref{sec:social_info} suggest that graphlets are leveraging the social structure underlying messaging patterns  on Facebook  to deliver the high compression/high accuracy ratios we empirically observe in Table \ref{table3}.  
 
The information graphlets utilize for inference comes exclusively from positive weights in the network.
This feature is the source of desirable properties, however, because of this, graphlets are also sensitive to missing data; zero edge weights cannot be imputed.
More research is needed to address this issue properly, while maintaining the properties outlined in Section \ref{sec:theory}.
Empirical solutions include 
 considering powers of the original network, as in Section \ref{reshis},
 or cumulating weights over a long observation window, as in Section \ref{sec:fb}.
Intuitively, cumulating messages over long time windows will reveal a large set of basis elements.
Algorithm \ref{algo2} can then be used to project message counts over a short observation window onto the larger set of basis elements to estimate which elements were active in the short window and how much. 
Recent empirical studies suggest three months as a useful cumulation period \citep{VOCI,Koss:Watt:2006}.

\subsection*{Acknowledgments}

The authors wish to thank Prakash Balachandran, Jonathan Chang and Cameron Marlow.
This work was partially supported by the National Science Foundation under grants DMS-1106980, IIS-1017967, and CAREER award IIS-1149662, and by the Army Research Office under grant 58153-MA-MUR, all to Harvard University. EMA is an Alfred P. Sloan Research Fellow.


\bibliographystyle{plainnat}

\begin{thebibliography}{30}
\providecommand{\natexlab}[1]{#1}
\providecommand{\url}[1]{\texttt{#1}}
\expandafter\ifx\csname urlstyle\endcsname\relax
  \providecommand{\doi}[1]{doi: #1}\else
  \providecommand{\doi}{doi: \begingroup \urlstyle{rm}\Url}\fi

\bibitem[Airoldi and Haas(2011)]{airohaas:2010}
E.~M. Airoldi and B.~Haas.
\newblock Polytope samplers for inference in ill-posed inverse problems.
\newblock In \emph{Journal of Machine Learning Research, W\&CP}, volume 15
  (AISTATS), 2011.

\bibitem[Airoldi et~al.(2008)Airoldi, Blei, Fienberg, and
  Xing]{Airo:Blei:Fien:Xing:2008}
E.~M. Airoldi, D.~M. Blei, S.~E. Fienberg, and E.~P. Xing.
\newblock Mixed membership stochastic blockmodels.
\newblock \emph{Journal of Machine Learning Research}, 9:\penalty0 1981--2014,
  2008.

\bibitem[Blitzstein and Diaconis(2010)]{blit:diac:2010}
J.~K. Blitzstein and P.~Diaconis.
\newblock A sequential importance sampling algorithm for generating random
  graphs with prescribed degrees.
\newblock \emph{Internet Mathematics}, 6\penalty0 (4):\penalty0 489--522, 2010.

\bibitem[Bron and Kerbosch(1973)]{BKcl}
C.~Bron and J.~Kerbosch.
\newblock Algorithm 457: {F}inding all cliques of an undirected graph.
\newblock \emph{Communications of the ACM}, pages 575--577, 1973.

\bibitem[Coifman and Maggioni(2006)]{coif:magg:2006}
R.~R. Coifman and M.~Maggioni.
\newblock Diffusion wavelets.
\newblock \emph{Applied and Computational Harmonic Analysis}, 21\penalty0
  (1):\penalty0 53--94, 2006.

\bibitem[Cortes et~al.(2001)Cortes, Pregibon, and Volinsky]{VOCI}
Corinna Cortes, Daryl Pregibon, and Chris Volinsky.
\newblock Communities of interest.
\newblock \emph{Proceedings of Intelligent Data Analysis}, pages 105--114,
  2001.

\bibitem[Csiszar and Shields(2004)]{InfoCezar}
Imre Csiszar and Paul~C. Shields.
\newblock \emph{Information Theory and Statistics: {A} Tutorial}.
\newblock Foundations and Trends{\^o} in Communications and Information Theory,
  2004.

\bibitem[Erd\"os and R\'enyi(1959)]{Erdo:Reny:1959}
P.~Erd\"os and A.~R\'enyi.
\newblock On random graphs.
\newblock \emph{Publicationes Mathematicae Debrecen}, 5:\penalty0 290--297,
  1959.

\bibitem[{Frantz-Parsons}(2005)]{EFPM}
E.~{Frantz-Parsons}.
\newblock Midnight rangers: Costume and performance in the reconstruction-era
  ku klux klan.
\newblock \emph{Journal of American History}, 92:3:\penalty0 811--836, 2005.

\bibitem[{Frantz-Parsons}(2011)]{EFPK}
E.~{Frantz-Parsons}.
\newblock Klan skepticism and denial in reconstruction-era public discourse.
\newblock \emph{Journal of Southern History}, 67:1:\penalty0 53--90, 2011.

\bibitem[Gilbert(1959)]{Gilb:1959}
E.~N. Gilbert.
\newblock Random graphs.
\newblock \emph{Annals of Mathematical Statistics}, 30:\penalty0 1141--1144,
  1959.

\bibitem[Goldenberg et~al.(2010)Goldenberg, Zheng, Fienberg, and
  Airoldi]{gold:zhen:fien:airo:2010}
A.~Goldenberg, A.~X. Zheng, S.~E. Fienberg, and E.~M. Airoldi.
\newblock Statistical models of networks.
\newblock \emph{Foundations and Trends in Machine Learning}, 2\penalty0
  (2):\penalty0 1--117, 2010.

\bibitem[Handcock et~al.(2007)Handcock, Raftery, and
  Tantrum]{Hand:Raft:Tant:2007}
M.~Handcock, A.~E. Raftery, and J.~Tantrum.
\newblock Model-based clustering for social networks (with discussion).
\newblock \emph{Journal of the Royal Statistical Society, Series A},
  170:\penalty0 301--354, 2007.

\bibitem[Hastie et~al.(2001)Hastie, Tibshirani, and
  Friedman]{Hast:Tibs:Frie:2001}
T.~Hastie, R.~Tibshirani, and J.~H. Friedman.
\newblock \emph{The {E}lements of {S}tatistical {L}earning: {D}ata {M}ining,
  {I}nference, and {P}rediction}.
\newblock Springer-Verlag, 2001.

\bibitem[Hoff(2009)]{hoff:2009}
P.~D. Hoff.
\newblock Multiplicative latent factor models for description and prediction of
  social networks.
\newblock \emph{Computational \& Mathematical Organization Theory}, 15\penalty0
  (4):\penalty0 261--272, 2009.

\bibitem[Jiang et~al.(2011)Jiang, Yao, Liu, and Guibas]{jian:yao:liu:guib:2011}
X.~Jiang, Y.~Yao, H.~Liu, and L.~Guibas.
\newblock Compressive network analysis.
\newblock Manuscript, April 2011.
\newblock URL \url{http://arxiv.org/abs/1104.4605}.

\bibitem[Kallenberg(2005)]{kall:2005}
O.~Kallenberg.
\newblock Probabilistic symmetries and invariance principles.
\newblock In \emph{Probability and its Applications}. Springer, New York, 2005.

\bibitem[Kannan et~al.(1988)Kannan, Naor, and Rudich]{IRGK}
S.~Kannan, M.~Naor, and S.~Rudich.
\newblock Implicit representation of graphs.
\newblock \emph{Proceedings of 20th Annual ACM Symposium on Theory of
  Computing}, 1988.

\bibitem[Kendall(1989)]{PROC}
D.~G. Kendall.
\newblock A survey of the statistical theory of shape.
\newblock \emph{Statistical Science.}, 4:\penalty0 87--99, 1989.

\bibitem[Kim and Leskovec(2010)]{kim:lesk:2010}
M.~Kim and J.~Leskovec.
\newblock Multiplicative attribute graph model of real-world networks.
\newblock In \emph{7th Workshop on Algorithms and Models for the Web Graph
  (WAW)}, 2010.

\bibitem[Kondor et~al.(2009)Kondor, Shervashidze, and
  Borgwardt]{kond:sher:borg:2009}
R.~Kondor, N.~Shervashidze, and K.~M. Borgwardt.
\newblock The graphlet spectrum.
\newblock In \emph{International Conference on Machine Learning}, volume~26,
  2009.

\bibitem[Kossinets and Watts(2006)]{Koss:Watt:2006}
G.~Kossinets and D.~J. Watts.
\newblock Empirical analysis of an evolving social network.
\newblock \emph{Science}, 311:\penalty0 88--90, 2006.

\bibitem[Lazer et~al.(2009)Lazer, Pentland, Adamic, Aral, Barab{\'a}si, Brewer,
  Christakis, Contractor, Fowler, Gutmann, Jebara, King, Macy, Roy, and {Van
  Alstyne}]{laze:pent:adam:aral:2009}
D.~Lazer, A.~Pentland, L.~Adamic, S.~Aral, A.~L. Barab{\'a}si, D.~Brewer,
  N.~Christakis, N.~Contractor, J.~Fowler, M.~Gutmann, T.~Jebara, G.~King,
  M.~Macy, D.~Roy, and M.~{Van Alstyne}.
\newblock Computational social science.
\newblock \emph{Science}, 323\penalty0 (5915):\penalty0 721--723, 2009.

\bibitem[Lee et~al.(2008)Lee, Nadler, and Wasserman]{lee:nadl:wass:2008}
A.~B. Lee, B.~Nadler, and L.~Wasserman.
\newblock Treelets: {A}n adaptive multiscale basis for sparse unordered data.
\newblock \emph{Annals of Applied Statistics}, 2\penalty0 (2):\penalty0
  435--471, 2008.

\bibitem[Leskovec et~al.(2010)Leskovec, Chakrabarti, Kleinberg, Faloutsos, and
  Ghahramani]{lesk:chak:klei:falo:2010}
J.~Leskovec, D.~Chakrabarti, J.~Kleinberg, C.~Faloutsos, and Z.~Ghahramani.
\newblock Kronecker graphs: {A}n approach to modeling networks.
\newblock \emph{Journal of Machine Learning Research}, 11:\penalty0 985--1042,
  2010.

\bibitem[Przulj et~al.(2006)Przulj, Corneil, and Jurisica]{przu:corn:juri:2006}
N.~Przulj, D.~G. Corneil, and I.~Jurisica.
\newblock Efficient estimation of graphlet frequency distributions in
  protein--protein interaction networks.
\newblock \emph{Bioinformatics}, 22\penalty0 (8):\penalty0 974--980, 2006.

\bibitem[Richardson(1972)]{RWBB}
W.~H. Richardson.
\newblock Bayesian-based iterative method of image restoration.
\newblock \emph{Journal of the Optical Society of America}, 62\penalty0
  (1):\penalty0 55--59, 1972.

\bibitem[Searle(2006)]{sear:2006}
S.~R. Searle.
\newblock \emph{Matrix Algebra Useful for Statistics}.
\newblock Wiley-Interscience, 2006.

\bibitem[Shawky and Bakoban(2009)]{OSEG}
A.~I. Shawky and R.~A. Bakoban.
\newblock Order statistics from exponentiated gamma distribution and associated
  inference.
\newblock \emph{Int. J. Contemp. Math. Sciences,}, 4\penalty0 (2):\penalty0
  71--91, 2009.

\bibitem[Traud et~al.(2011)Traud, Kelsic, Mucha, and
  Porter]{trau:kels:much:port:2011}
A.~L. Traud, E.~D. Kelsic, P.~J. Mucha, and M.~A. Porter.
\newblock Community structure in online collegiate social networks.
\newblock \emph{SIAM Review}, 53\penalty0 (3):\penalty0 526--543, 2011.

\end{thebibliography}


\appendix
\section{Proofs of theorems and lemmas}

In the following, $\Lambda \circ C$ denotes the element-wise (Hadamard) product of two matrices $\Lambda$ and $C$.


\subsection{Proof of Theorem \ref{thm1}}\label{appthm1}

\begin{proof}
We will show $B$ can be recovered from $\Lambda$ using the deterministic algorithm \ref{algo0}.

\begin{algorithm}[!h]\label{algo0}
Set $\Lambda'=\Lambda$\\
\ \ \textbf{While}{(There is at least one non-zero element in $\Lambda'$)}\\
\ \ \ \ \ \ \ \   $\mu_c=\min_{i,j}(\Lambda \circ C)(i,j)$\\
\ \ \ \ \ \ \ \   ${ij}_c$=$ \arg\min_{ij} (\Lambda \circ C)(i,j)$\\
\ \ \ \ \ \ \ \   C=largest clique in $\Lambda'$ that includes the edge ${ij}_c$\\
\ \ \ \ \ \ \ \   $\Lambda'$=$\Lambda'-\mu_c C$\\
\ \ \ \ \ \ \ \   Add $(C,\mu_c)$ to $(B,W)$ respectively.\\
\protect\smallskip
\caption{Algorithm in theorem \ref{thm1} for uniqueness of non expandable basis}
\end{algorithm}

In each step, due to non-expandability, the maximal clique of $\Lambda'$, $C$, will not be a subset of any other clique and will contain an edge which only belongs to $C$, using Lemma \ref{lem0} below. The weight of this edge in $\Lambda'$ will be $\mu_c$ the coefficient of clique $C$ in the composition. Also, edges of clique $C$ in the network will not have weights less than $\mu_c$ because all the coefficients in composition($\mu$s) are positive. So, in each step, the weight corresponding to the edge(s) unique to $C$ will be correctly detected and subtracted from the network.

Hence, using algorithm \ref{algo0} we can find basis $B$ and coefficients $W$ of matrix in $\Lambda'=BWB^T$. Since the algorithm is deterministic given $\Lambda$, it will provide us only one set of unique NEB.
\end{proof}


\begin{lem}\label{lem0}
If B generates a non-expandable basis set $S$, any clique $C \in S$ will contain an edge that does not belong to any other clique in $S$.
\end{lem}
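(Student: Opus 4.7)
Proof plan: I would proceed by contradiction. Suppose some $C \in S$ fails the conclusion, so that every edge of $C$ is shared with at least one other basis element. For each edge $e$ of $C$ pick a basis clique $C_e \in S \setminus \{C\}$ that contains $e$, and form $P' = \bigvee_{e \in C} C_e$. By construction $C \subseteq P'$, and $P'$ is a disjunction of basis elements, none of which is $C$, so $P'$ is an expansion of $C$ in the sense of Definition \ref{def03}. The construction of $P'$ from the $C_e$'s is the natural object to feed into non-expandability, because it packages the contradiction hypothesis (``every edge of $C$ is shared'') into a single edge-union of basis elements that excludes $C$.

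Non-expandability (Definition \ref{def1}) then produces some $P_j \in S$ with $P' \subseteq P_j$. The plan is to rule out both possibilities $P_j \neq C$ and $P_j = C$. If $P_j \neq C$, the sandwich $C \subseteq P' \subseteq P_j$ gives $C \subsetneq P_j$ outright, and this contradicts the antichain property of the basis $S$ under the subgraph relation --- a property I would establish as a short preliminary step from Definition \ref{def1} by noting that any strict containment $C \subsetneq P_j$ between basis elements makes $P_j$ itself an expansion of $C$ (via $I = \{P_j\}$), and iterating non-expandability then produces an infinite strictly increasing chain of basis elements inside the finite collection $S$. If instead $P_j = C$, the sandwich collapses to $P' = C$, which forces every $C_e$ to satisfy $C_e \subsetneq C$ (strict because $C_e \neq C$ by choice); but this puts $C_e \subsetneq C$ with both sides in $S$, which the antichain property already forbids.

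The main obstacle is securing the antichain property from Definition \ref{def1}, since once in hand the lemma reduces cleanly to it in both cases. The case $P_j = C$ is the more subtle one, because the expansion $P'$ folds back into $C$ itself and no immediate contradiction is visible at the level of $C$; the argument has to descend one level and extract the contradiction at the level of the smaller basis elements $C_e$. The observation that finiteness of $S$ plus Definition \ref{def1} rules out any strict chain among basis elements is what makes both branches close, and so the lemma follows.
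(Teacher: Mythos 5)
Your overall strategy---assume every edge of $C$ is covered by some other basis clique $C_e$, form $P'=\bigvee_{e} C_e$, and invoke non-expandability---is exactly the route the paper takes; its proof is essentially a one-sentence version of your first paragraph. The problem lies in the step you yourself flag as the main obstacle: the antichain property. Your derivation of it does not go through. From $C\subsetneq P_j$ you correctly observe that $P_j$ is an expansion of $C$ (via $I=\{j\}$), but the non-expandability condition then only guarantees the existence of \emph{some} basis element containing $P_j$; since $P_j$ is itself a basis element, this is satisfied reflexively by $P_j\subseteq P_j$. The iteration therefore stalls rather than producing a strictly increasing chain, and no contradiction with the finiteness of $S$ arises. (Reading the ``$\subset$'' in the definition of non-expandability as strict would rescue your chain argument, but it would also make the definition unsatisfiable: each $P_k$ is trivially an expansion of itself, so every basis element would have to be strictly contained in another one.) Since both of your branches---$P_j\neq C$ yielding $C\subsetneq P_j$, and $P_j=C$ yielding $C_e\subsetneq C$---terminate in an appeal to the antichain property, the proof as written does not close.

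The gap is not cosmetic: the two-element collection $S=\{C,C'\}$ with $C\subsetneq C'$ satisfies the non-expandability definition as literally stated (every expansion of either element is contained in $C'$), yet $C$ has no private edge, so the lemma is actually false without an additional hypothesis. What is implicitly assumed throughout the paper is that the basis elements are maximal cliques of the support of $\Lambda$ and hence pairwise incomparable under $\subseteq$; the paper's own one-line argument (``$C$ can be decomposed to other cliques, which contradicts non-expandability'') silently relies on the same unstated antichain property and is no more rigorous than your attempt. If you add pairwise incomparability of the elements of $S$ as an explicit hypothesis (or derive it from maximality of the cliques rather than from Definition~\ref{def1}), both of your cases close immediately, and your argument becomes a correct and more careful version of the paper's.
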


\begin{proof}
Proof by contradiction: If every edge in $C$ overlaps with another clique in $S$, then $C$ can be decomposed to other cliques, which contradicts non-expandability of $S$.
\end{proof}


\subsection{Proof of Theorem \ref{thm2}}\label{appthm2}

\begin{proof}
Recall that $P_{k}=b_{.k}b_{.k}'$ and $\Lambda=\sum_l \mu_l P_l$. For any $k \in \{1,2,..,K\}$, setting a threshold on the edge weights of the network $\Lambda$, such that edges with weights above the threshold are selected, obtains a binary network. 
Define $G_k$ to be those indices $\ell$ for which $B_{\ell}$ covers $B_{k}.$
If the above mentioned threshold is set to $th_k=\sum_{l\in G_k}\mu_l$, then a binary network $\Lambda_{th=th_k}$ will be revealed. Below we will prove that $P_{k}$ is a maximal clique for $\Lambda_{th=th_k}$ using Lemma \ref{lem2}.

To show that $P_{k}$ is a maximal cliques of $\Lambda_{th=th_k}$ we need to prove the following two statements:

(1) $P_{k} \subseteq \Lambda_{th=th_k}$ which means clique $P_k$ should be included in $\Lambda_{th=th_k}$. 
This statement is true because the weights of the edges in $\Lambda$ that correspond to $P_{k}$, are greater than $th_k$ since $\mu$s are positive.

(2) $P_{k}$ can not be part of a larger clique in $\Lambda_{th=th_k}$. 
By contradiction, assume such a larger clique exists and name it $P'$ then $P_{k} \subset P' \subseteq \Lambda_{th=th_k}$. Considering that the basis is non-expandable, given lemma (\ref{lem2})  and $P' \subseteq \Lambda_{th=th_k} \subseteq \bigvee_{i \in T \backslash G} P_{l_i}$ then we can find $j \in T\backslash G$ that $P' \subseteq P_{j}$. Hence, we have $P_{k} \subset P' \subseteq P_{j}$ which leads to $P_{k}\subset P_{j}$ for a $j \in T \backslash G$. This is clearly a contradiction since such $j$'s are excluded in $G$.

With $P_k$'s being one of the maximal cliques of thresholded binary networks at $th_k$ and considering the fact that the threshold level will be equal to $\sum_{i\in I}\mu_i$ for any $I \in \{G_1,G_2,...,G_K\}$ (because $\min_{ij} \Lambda_{ij} \le \sum_{l\in I}\mu_l \le \max_{ij} \Lambda_{ij}$ ), 
it is clear that $B\subseteq B_c$.
This ends the proof of theorem (\ref{thm2}).
\end{proof}


\begin{lem}\label{lem2}
$\Lambda_{th=th_k}\subseteq \bigvee_{i \in T \backslash G_k} P_{i}$, for $T=1\dots K$.
\end{lem}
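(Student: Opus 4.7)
My plan is to verify the inclusion edge by edge: pick any $(i,j)$ with $\Lambda_{ij}\ge th_k$ and produce a witness $\ell^\star\in T\setminus G_k$ with $P_{\ell^\star}(i,j)=1$, which immediately puts $(i,j)$ inside $\bigvee_{\ell\in T\setminus G_k}P_\ell$.

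The main accounting I would carry out is to split the Poisson rate along the partition $T = G_k \sqcup (T\setminus G_k)$. Writing
\[
\Lambda_{ij}\;=\;\sum_{\ell\in G_k}\mu_\ell\,P_\ell(i,j)\;+\;\sum_{\ell\in T\setminus G_k}\mu_\ell\,P_\ell(i,j),
\]
I would bound the first sum above by $\sum_{\ell\in G_k}\mu_\ell=th_k$, using $P_\ell(i,j)\in\{0,1\}$. Combined with the threshold hypothesis this yields
\[
\sum_{\ell\in T\setminus G_k}\mu_\ell\,P_\ell(i,j)\;\ge\;\Lambda_{ij}-th_k\;\ge\;0,
\]
so whenever this remainder is strictly positive, positivity of each $\mu_\ell$ pinpoints at least one $\ell^\star\in T\setminus G_k$ with $P_{\ell^\star}(i,j)=1$ and the edge is accounted for.

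The one configuration where the elementary split is inconclusive is the arithmetic tie $\Lambda_{ij}=th_k$ coupled with $P_\ell(i,j)=1$ for every $\ell\in G_k$, i.e., the edge $(i,j)$ belongs to \emph{every} basis clique that covers $P_k$. Here I would lean on non-expandability: the union $P' = \bigvee_{\ell\in G_k}P_\ell$ is a join of basis elements and it contains $P_k$, hence is an expansion of $P_k$; by hypothesis it also contains the edge $(i,j)$. Non-expandability then delivers an index $j^\star$ with $P'\subseteq P_{j^\star}$, and a case analysis of where $j^\star$ can sit (together with the no-duplicated-cliques convention on the basis, which prevents $P_{j^\star}$ from simultaneously lying in and dominating every member of $G_k$) places $j^\star$ in $T\setminus G_k$ and supplies the required witness.

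The part I expect to be the main obstacle is precisely this boundary tie: the arithmetic split gives only $\le$ in the first sum, so without extra input one cannot rule out a vanishing remainder from $T\setminus G_k$. Non-expandability is exactly the combinatorial hypothesis that excludes the pathological overlap pattern among basis cliques which would otherwise be needed to saturate the bound. The overall structure of the proof is therefore two pieces: a one-line partition-of-the-sum argument driven by positivity of the $\mu_\ell$, together with one targeted appeal to non-expandability to close the knife-edge case.
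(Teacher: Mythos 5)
Your main computation --- splitting $\Lambda_{ij}=\sum_{\ell\in G_k}\mu_\ell P_\ell(i,j)+\sum_{\ell\in T\setminus G_k}\mu_\ell P_\ell(i,j)$, bounding the first sum by $th_k$, and extracting a witness index from positivity of the remainder --- is exactly the content of the paper's own proof, which states the contrapositive: an edge lying only in cliques indexed by $G_k$ has total weight at most $th_k$ and therefore cannot clear the threshold. Up to that point your argument is correct and essentially identical to the paper's, and you have rightly noticed the one place the paper is silent, namely whether the threshold is strict.

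The gap is in your treatment of the tie $\Lambda_{ij}=th_k$. First, the witness you extract from non-expandability necessarily lands back inside $G_k$: if $P'=\bigvee_{\ell\in G_k}P_\ell$ and $P'\subseteq P_{j^\star}$, then $P_k\subseteq P'\subseteq P_{j^\star}$, so $P_{j^\star}$ covers $P_k$ and hence $j^\star\in G_k$ by the very definition of $G_k$; no convention about duplicated cliques prevents this (take $G_k=\{k\}$ and $j^\star=k$). There is also a definitional obstacle: an expansion of $P_k$ is meant to be a clique, and the join $\bigvee_{\ell\in G_k}P_\ell$ need not be one, so invoking non-expandability on it is not licensed. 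Second, and more fundamentally, the boundary case cannot be closed by any argument, because under a non-strict threshold the lemma is simply false: with $K=1$ and a single basis clique $P_1$, every edge of $P_1$ has weight exactly $th_1=\mu_1$, while $T\setminus G_1=\emptyset$, so $\Lambda_{th=th_1}=P_1\not\subseteq\bigvee_{i\in\emptyset}P_i$. The correct reading --- the one the paper's proof implicitly adopts when it says the weight of such an edge would have to be ``more than $th_k$'' --- is that $\Lambda_{th=th_k}$ retains only edges with weight strictly greater than $th_k$; under that reading your tie case never arises and your first paragraph already completes the proof.
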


\begin{proof}
We will prove this by contradiction. Assume the above claim is not true. In that case, an edge in $\Lambda_{th=th_k}$ can be found which belongs to complementary set: $\bigvee_{i \in G_k} P_{i}\backslash \bigvee_{i \in T \backslash G_k} P_{i}$. However, the weight for this edge is more than $th_k=\sum_{l\in G_k}\mu_l$ which is not possible for edges included in $\bigvee_{i \in G_k} P_{i}\backslash\bigvee_{i \in  T\backslash G_k} P_{i}$ because the maximum weight for edges obtained from this group of cliques is at most $th_k=\sum_{l\in G_k}\mu_l$. End of proof of the lemma (\ref{lem2}).
\end{proof}


\subsection{Proof of Theorem \ref{thm4}}\label{appthm3}

\begin{proof}
Using the definition of Graphlet, $P_1,..,P_K$ are cliques generated from matrix $B$. Considering that $Y=\sum_l \mu_lP_l$, any clique $P'$ detected at different thresholds will be in the form of $P'=\bigwedge_{l \in I} P_l$ for an $I \subseteq \{1,2,..,K\}$. Each clique will be detected at most $Q$ times. Then the total number of detected cliques will not be more than $Q$ times $2^K$, the total number of possible $I$s. However, based on the distribution of elements of matrix $B$ we know the typical set for $I$s have size of $2^{KH(p_N)}$, asymptotically for large $n$, where $H(.)$ is entropy function. This follows the fact that the bit strings for nodes will be in a typical set where each bit-string has $Kp_N$ bits of one. Hence, the number of nonempty common intersections among a nontypical set of cliques will converge to zero and we will have $C_{N,p}=Q2^{KH(p_N)}+QK$.
\end{proof}


\end{document}